\documentclass[dvipsnames,letterpaper,11pt]{article}

\usepackage[utf8]{inputenc}
\usepackage[english]{babel}

\usepackage{booktabs}
\usepackage{multirow}

\usepackage{xspace}
\usepackage{bm}
\usepackage[tbtags]{amsmath}
\usepackage{amsfonts,amsthm,amssymb}
\usepackage{mathrsfs} 
\usepackage{mathtools}
\usepackage{thmtools}
\usepackage[]{mdframed}

\usepackage{tikz}

\usetikzlibrary{shadows.blur}

\usetikzlibrary{shapes.geometric}

\usepackage{verbatim}
\usepackage{subcaption}

\usepackage[ruled,linesnumbered]{algorithm2e}

\usepackage[margin=1in]{geometry}
\usepackage[pagebackref=true,colorlinks,urlcolor=blue,linkcolor=purple,citecolor=blue,bookmarks,bookmarksopen,bookmarksnumbered]{hyperref}

\usepackage[capitalise,nameinlink]{cleveref}

\usepackage{tcolorbox}

\usepackage{todonotes} 
\usepackage{tablefootnote}

\usepackage{mleftright}

\usepackage{dsfont}
\usepackage[T1]{fontenc}

\usepackage{physics}

\newtheorem{theorem}{Theorem}[section]
\newtheorem{lemma}[theorem]{Lemma}

\newtheorem{definition}[theorem]{Definition}

\newtheorem{proposition}[theorem]{Proposition}

\newtheorem{remark}[theorem]{Remark}

\DeclareMathOperator*{\E}{\mathbb{E}}
\newcommand{\N}{\mathbb{N}}

\newcommand{\F}{\mathbb{F}}

\newif\ifshownotes
\shownotestrue

\ifshownotes
	\newcommand{\hstnote}[1]{\textcolor{red}{\textbf{[Shengtang: #1]}}}
\else
	\newcommand{\hstnote}[1]{}
\fi

\numberwithin{equation}{section}

\title{
Bounded Independence for $k$-Min-Wise Hashing:\\
Tight Bounds and Limitations of Structured Hashing
}
\author{
Xue Chen\thanks{\texttt{xuechen1989@ustc.edu.cn}. University of Science and Technology of China.}
\and
Shengtang Huang\thanks{\texttt{peanuttang1320061044@gmail.com}. Center on Frontiers of Computing Studies, Peking University.}
\and
Xin Li\thanks{\texttt{lixints@cs.jhu.edu}. Department of Computer Science, Johns Hopkins University.}
\and
Haoran Wang\thanks{\texttt{whr.hrwang@gmail.com}. School of Mathematical Sciences, Peking University.}
}
\date{}

\begin{document}

\maketitle

\begin{abstract}
Min-wise hashing and its $k$-min-wise extension are fundamental tools in sampling, sketching, similarity estimation, etc. A standard approach to constructing such families is bounded independence. For ordinary min-wise hashing, the required degree of independence is fully understood: $\Theta(\log 1/\delta)$-wise independence is both sufficient and necessary. For $k$-min-wise hashing, however, the best previous result only showed that $O(k\log\log1/\delta+\log1/\delta)$-wise independence suffices, with no matching lower bound.

We give a tight characterization of the amount of bounded independence required for $k$-min-wise hashing, proving that $\Theta(k+\log 1/\delta)$-wise independence is both sufficient and necessary. This improves the previous upper bound and provides a matching lower bound. Consequently, the standard construction of bounded-independent hash families has seed length $O\bigl((k+\log 1/\delta)\cdot\log(N/\delta)\bigr)$. In particular, for polynomially small $\delta$ and any $\Omega(\log N) \le k \le N^{1 - c}$, it achieves the optimal seed length $O(k\log N)$.

We further investigate two standard low-independence hash families. For random affine functions over $\F_2$, which form a pairwise independent family, we show that the multiplicative error is $\Omega(\log N)$ even for ordinary min-wise hashing. For simple tabulation hashing, which is $3$-wise independent and performs well for ordinary min-wise hashing, we show that it incurs a multiplicative error $\Omega(N)$ for $k$-min-wise hashing whenever $k\ge 4$.
\end{abstract}

\thispagestyle{empty}

\newpage
\setcounter{page}{1}

\section{Introduction}\label{sec:introduction}

Min-wise hashing, introduced by Broder, Charikar, Frieze, and Mitzenmacher~\cite{BCFM98}, is a fundamental primitive in randomized algorithms, with applications to similarity and rarity estimation, near-duplicate detection, and $\ell_0$-sampling in streaming algorithms~\cite{Cohen_similarity_est,DM_similarity_rarity,Henzinger06,sampling_survey}. Intuitively, min-wise hashing uses a hash function to approximately simulate a uniformly random ordering: for any designated element, the probability that it receives the smallest hash value should be close to that under a random permutation. Its $k$-min-wise extension, introduced by Feigenblat, Porat, and Shiftan~\cite{minwise_FPS11}, requires an analogous guarantee for any prescribed set of at most $k$ elements: the probability that all of them receive smaller hash values than all remaining elements should be close to the corresponding probability under a uniformly random permutation.

\begin{definition}[$k$-Min-Wise Hashing]\label{def:k_min_wise}
We use $a = b \pm \delta$ to denote $a \in [b - \delta, b + \delta]$. Let $0<\delta<1$, and let $\mathcal H$ be a family of functions from $[N]$ to an ordered range $[M]$. We say that $\mathcal H$ is a $k$-min-wise hash family with multiplicative error $\delta$ if for every pair of sets $X$ and $Y\subseteq[N]$ satisfying $X\cap Y=\varnothing$ and $|Y|\le k$,
$$
    \Pr_{h\sim\mathcal H}\left[\max_{y\in Y}h(y)<\min_{x\in X}h(x)\right]
    =(1\pm\delta) \cdot \binom{|X| + |Y|}{|Y|}^{-1}.
$$
The case $k=1$ is called min-wise hashing.
\end{definition}

Under a fully random continuous ordering, the relative ordering of $X\cup Y$ is a uniformly random permutation. Hence every $|Y|$-subset is equally likely to occupy the first $|Y|$ positions, and the probability in the definition is exactly $\binom{|X|+|Y|}{|Y|}^{-1}$. For a finite range, ties between the elements of $X$ and $Y$ introduce a small additional error. Throughout the paper, we take $M=\Omega(N/\delta)$, for which a fully random function into $[M]$ already satisfies the definition within an $O(\delta)$ multiplicative error, as stated by Indyk~\cite{minwise_Indyk}.

A central goal of hashing is to construct such families using as little randomness as possible. One of the most natural approaches is bounded independence. A $t$-wise independent hash family agrees exactly with a fully random function on every collection of at most $t$ inputs and admits the standard polynomial construction with seed length essentially linear in $t$. This leads to the question of determining the amount of independence needed for $k$-min-wise hashing as a function of $k$ and $\delta$.

\subsection{Prior Works}

For ordinary min-wise hashing, Indyk~\cite{minwise_Indyk} proved that $O(\log1/\delta)$-wise independence suffices to achieve multiplicative error $\delta$. Pătraşcu and Thorup~\cite{PT_lb_t_wise_min_wise} proved a matching lower bound. Thus, for min-wise hashing, the power of bounded independence is completely characterized: the optimal degree is $\Theta(\log1/\delta)$.

For $k$-min-wise hashing, Feigenblat, Porat, and Shiftan~\cite{minwise_FPS11} proved that $O(k\log\log1/\delta + \log 1/\delta)$-wise independence suffices. No matching lower bound depending on $k$ was known. In particular, for $k=\Theta(\log N)$ and $\delta=N^{-c}$, their analysis combined with the standard construction of bounded independence (e.g. \cite{ABI86_construction_t_wise}) gives seed length $O(k\log N\log\log N)$.

There is also a line of work constructing min-wise hash families using general pseudorandomness tools. Saks, Srinivasan, Zhou, and Zuckerman~\cite{SSZZ00} related min-wise hashing to pseudorandom generators for combinatorial rectangles, and Gopalan and Yehudayoff~\cite{GY20} later improved the seed length. Building on pseudorandom generators for combinatorial rectangles together with additional pseudorandomness techniques, Chen, Huang, and Li~\cite{CHL26} recently constructed explicit $k$-min-wise families using $O(k\log N)$ random bits and achieving multiplicative error $2^{-O(\log N/\log\log N)}$ for $k=\log^{O(1)}N$. Their seed length is optimal up to constant factors, while the error is almost polynomially small. This raises the question of whether one can simultaneously obtain the optimal $O(k\log N)$ seed length and polynomially small error in the large-$k$ regime.

Besides general constructions, another line of work studies the min-wise behavior of concrete hash families. Two particularly classical examples are modular affine hashing and tabulation hashing, both belonging to the universal-hashing framework initiated by Carter and Wegman~\cite{CW_universal,WC_bounded_independence}. These families are simple to represent and evaluate and have been extensively studied in hash tables and randomized data structures. The one-dimensional modular affine family $h(x)=(ax+b)\bmod p$ is a canonical pairwise-independent construction. Nevertheless, Broder, Charikar, Frieze, and Mitzenmacher~\cite{BCFM98} and P\u{a}tra\c{s}cu and Thorup~\cite{PT_lb_t_wise_min_wise} showed that it incurs multiplicative error $\Omega(\log N)$ for ordinary min-wise hashing.

For simple tabulation hashing, the universe $[N]$ is viewed as $\Sigma^c$, where $|\Sigma|=N^{1/c}$, and a key $x=(x_1,\ldots,x_c)$ is hashed as $h(x)=\bigoplus_{i=1}^c T_i[x_i]$ using independent random tables. Although simple tabulation is only $3$-wise independent, P\u{a}tra\c{s}cu and Thorup~\cite{PT_tabulation} showed that it provides unexpectedly strong guarantees for several classical applications. In particular, for a set of $n$ keys, they proved a multiplicative min-wise error of $O\left(\frac{\log^2 N}{N^{1/c}}\right)$. P{\u{a}}tra{\c{s}}cu and Thorup~\cite{PT_twisted_tabulation} subsequently developed a cleaner related analysis for twisted tabulation and obtained the same error bound.  These results motivate asking whether the favorable min-wise behavior of structured low-independence families extends to the stronger $k$-min-wise property.

\subsection{Our Results}

We completely characterize the amount of bounded independence required for $k$-min-wise hashing.

\begin{theorem}[Informal Version of Results in \Cref{sec:bounded_independence}]
The degree of independence necessary and sufficient to guarantee $k$-min-wise hashing with multiplicative error $\delta$ is
$$
\Theta\left(k+\log\frac{1}{\delta}\right).
$$
\end{theorem}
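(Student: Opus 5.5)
The proof splits into an upper bound (sufficiency) and a lower bound (necessity).

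\textbf{Upper bound.} Set $|Y|=m\le k$ and $|X|=n$. We work with real-valued hashes: the finite range $M=\Omega(N/\delta)$ only adds the $O(\delta)$ tie error already noted. Condition on the value $\theta=\max_{y\in Y}h(y)$, or more conveniently split the unit interval into geometric scales. The target event is that every $y\in Y$ lands below a threshold $\theta$ and every $x\in X$ lands above it. Its exact probability under full randomness is $\int m\theta^{m-1}(1-\theta)^n\,d\theta=\binom{n+m}{m}^{-1}$. Under $t$-wise independence we fix the $m$ values $h(Y)$, which uses $m$ of the $t$ degrees of independence. Conditioned on $h(Y)$, the remaining $h|_X$ is still $(t-m)$-wise independent. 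We then need to estimate $\Pr[\text{no } x\in X \text{ falls below } \theta]$.

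Indyk's argument handles this second factor. Write the indicator $\prod_x(1-\mathbf{1}[h(x)<\theta])$ and use inclusion--exclusion truncated at an even or odd level $\ell$ (Bonferroni). This sandwiches the probability between truncated sums that only involve $\ell$-wise moments. The truncation error is at most $\binom{n}{\ell}\theta^\ell\le (e n\theta/\ell)^\ell$. Integrated against the density $m\theta^{m-1}$, which is itself exact because $m\le t$, the error in the regime $n\theta\lesssim \ell$ must be compared to the main term $\approx (m/n)^m\cdot m!$-scale. Regimes with $n\theta\gg m+\log(1/\delta)$ contribute negligibly: there the true probability is $e^{-\Omega(n\theta)}$, and bounded independence gives a tail bound of the same order via $\ell$-th moment concentration. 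For $n\theta\le C(m+\log 1/\delta)$ we choose $\ell=C'(m+\log 1/\delta)$, which makes the Bonferroni error at most $\delta$ times the main term, since the weight $\theta^{m-1}$ is compensated. So $t=m+\ell+O(1)=O(k+\log 1/\delta)$ suffices, improving the FPS $k\log\log(1/\delta)$ term. The key point, and \textbf{the main obstacle}, is to carry out the error accounting so that the truncation level $\ell$ needs to be only $O(m+\log 1/\delta)$, not $m\log\log(1/\delta)$. I would first try to compare $\int m\theta^{m-1}(en\theta/\ell)^\ell\mathbf 1[n\theta\le \ell/2e]\,d\theta$ directly with $\binom{n+m}{m}^{-1}\approx m!/n^m$. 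The ratio is roughly $(\ell/n)^m\cdot n^m/m!\cdot 2^{-\ell}\approx (e\ell/m)^m 2^{-\ell}$, which is $\le\delta$ when $\ell\ge C(m+\log 1/\delta)$. This is the heart of the improvement. The complementary large-$\theta$ region is handled with the $\ell$-wise tail bound $\Pr[\text{no } x<\theta]\le(\ell/(n\theta))^{\ell/2}$, integrated the same way.

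\textbf{Lower bound.} The $\Omega(\log 1/\delta)$ part follows from Pătraşcu--Thorup already at $k=1$, by taking $Y$ to be a singleton and padding. For $\Omega(k)$ we construct a $t$-wise independent family with $t=ck$ that violates the guarantee. Take $|X|=n$ large and $|Y|=k$, let all of $X$ be fully random, and correlate $Y$ so that any $t<k$ of its values are i.i.d.\ uniform while the full $k$-tuple is not. One option is a parity-type construction: choose uniform values, then condition on a global predicate of $Y$'s values that is invisible to any $k-1$ of them (e.g.\ $\sum_y \lfloor h(y)\cdot 2\rfloor \equiv 0 \pmod 2$ on bits). More effectively, take a mixture in which, with some probability, $Y$'s values are forced to avoid being all small simultaneously, compensated so that all $(k-1)$-marginals are uniform. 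Such a correction exists because the event ``all $k$ values $<\theta$'' is a product indicator whose only $(k-1)$-wise invisible component is the top Fourier/Walsh term. Whether the probability of the event shrinks to $0$ or doubles, this gives constant multiplicative error with $(k-1)$-wise independence. Hence $t\ge k$, and combining the two bounds gives $\Omega(k+\log 1/\delta)$.
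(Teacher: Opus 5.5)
Your overall plan is the paper's. You condition on $\theta=\max h(Y)$, spending $|Y|$ degrees of independence. You bound $A_\theta=\{\min h(X)>\theta\}$ by truncated Bonferroni for small $\theta$ and by a moment tail bound for large $\theta$. Your ratio $(e\ell/m)^m2^{-\ell}\le\delta$ is the paper's condition $z\ge k\log\bigl(2e\ell/(3k)\bigr)$, and your lower bound pairs the same parity construction with P\u{a}tra\c{s}cu--Thorup.

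\textbf{Upper bound: the two regimes do not meet.} You use the same order $\ell$ in both estimates. Your Bonferroni error $(en\theta/\ell)^\ell$ and your tail bound $(\ell/(n\theta))^{\ell/2}$ are \emph{both} at least $1$ for $\ell/e\le n\theta\le\ell$. Bounding $\Pr_\ell[A_\theta]$ trivially there costs weight about $(\ell/n)^m$. Relative to the main term $m!/n^m$ this is a ratio of order $\ell^m/m!$, nowhere near $\delta$. Also, bounded independence does not give a tail ``of the same order'' $e^{-\Omega(n\theta)}$. It gives only polynomial decay in $n\theta$, and the dyadic sum over large $\theta$ converges only because the moment order exceeds $2m$.

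You must decouple the two orders. The paper applies the tail bound at the reduced order $\ell'=\ell/30$, which is allowed because $\ell$-wise independence implies $\ell'$-wise independence. It splits at $n\theta=\ell/6$ (its $T=\ell M/(6s)$). Below the split $en\theta/(\ell-1)<1/2$, and above it $n\theta\ge5\ell'$. It also imposes $\ell'/2\ge m+1$ so the geometric series converges. The paper's alternative is to apply Bonferroni in the window to a subset $X'\subseteq X$ of size $\Theta(\ell/\theta)$, using $\{\min h(X)>\theta\}\subseteq\{\min h(X')>\theta\}$.

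Monotonicity also suffices. $A_\theta$ shrinks as $\theta$ grows, so on the window $\Pr_\ell[A_\theta]\le\Pr_\ell[A_{\theta_0}]\le e^{-\ell/(2e)}+2^{-\ell}$, where $n\theta_0=\ell/(2e)$. The window's weight is only $(O(\ell)/n)^m$, which your key-ratio computation absorbs with a larger constant. Your first sketch, with the split at $C(m+\log1/\delta)$ and truncation at $C'(m+\log1/\delta)$ for $C'\gg C$, points this way. But then the tail bound must be taken at an order well below $C(m+\log1/\delta)$, not at $\ell$.

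\textbf{Lower bound: the bias is not computed.} For the $\Omega(k)$ bound you name the right construction but never compute its effect. Saying ``whether the probability shrinks to $0$ or doubles'' leaves the crux unverified, and the ``mixture'' alternative is not specified. The computation is short. Since $h(X)$ is independent of $h(Y)$, condition on $\theta=\min h(X)$. For $\theta\le1/2$, the event $\max h(Y)<\theta$ puts all of $Y$ in the lower half. It therefore lies inside your parity event $\sum_y\lfloor 2h(y)\rfloor\equiv0\pmod 2$, which has probability $1/2$. So under the conditioned law this event has probability exactly $2\theta^k$. Hence the probability of $\max h(Y)<\min h(X)$ is at least $2\bigl(\binom{s+k}{k}^{-1}-2^{-s}\bigr)$, nearly twice the ideal value for large $s$. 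The paper obtains the exact error $\Pr[\operatorname{Bin}(s+k-1,1/2)\ge k]$ via the sorted-values/random-permutation decomposition.
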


The upper bound improves the result of Feigenblat, Porat, and Shiftan~\cite{minwise_FPS11} by removing the additional $\log\log 1/\delta$ factor multiplying $k$. We also provide a matching lower bound, proving that both the dependence on $k$ and the dependence on $\log 1/\delta$ are necessary. This gives a tight characterization of the power of bounded independence for constructing $k$-min-wise hash families.

The standard construction of bounded independence consequently has seed length $O\bigl((k+\log1/\delta)\cdot\log(N/\delta)\bigr)$. In particular, when $\delta=N^{-c}$ and $k=\Omega(\log N)$, this becomes $O(k\log N)$. In the polylogarithmic-$k$ regime, this seed length is optimal up to constant (see \Cref{sec:support-size-lower-bound}). Hence, for $\Omega(\log N)\le k\le\log^{O(1)}N$, we improve the almost-polynomial error of \cite{CHL26} to polynomially small error while preserving the optimal seed length. More generally, our bounded-independence characterization applies to every $k$ and continues to yield polynomially small error with seed length $O(k\log N)$.

Beyond the generic bounded-independence construction, we study two standard low-independence hash families and show that their additional structure does not lead to stronger $k$-min-wise guarantees.

We first consider random affine functions over $\F_2$, which are the natural higher-dimensional analogue of the classical modular affine family $h(x)=(ax+b)\bmod p$. The latter is known to incur multiplicative error $\Omega(\log N)$ for ordinary min-wise hashing~\cite{BCFM98,PT_lb_t_wise_min_wise}. We prove that the same limitation persists over binary vector spaces.

\begin{theorem}[Informal Version of \Cref{thm:min_wise_random_affine_not_good}]
There exists a set $S$ of size $N$ such that, for a uniformly random affine function $h(x)=Ax\oplus b$,
$$
\Pr\left[h(0)<\min_{s \in S} h(s)\right]=\Omega\left(\frac{\log N}{N}\right).
$$
Consequently, random affine hashing incurs multiplicative error $\Omega(\log N)$ for min-wise hashing.
\end{theorem}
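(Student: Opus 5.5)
The plan is to reduce the event to a statement about leading bits, and then to pick $S$ so that the number of distinct leading bits of its images has a heavy lower tail.

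\textbf{Reduction to leading bits.} View keys as vectors in $\F_2^D$ and hash values $h(x)=Ax\oplus b\in\F_2^m$ as integers, with the most significant coordinate first. For a nonzero $v\in\F_2^m$ we have $b\oplus v>b$ if and only if $b$ has a $0$ in the leading (highest nonzero) coordinate of $v$. Write $L(A,S)$ for the set of leading coordinates of the vectors $As$, $s\in S$. Since $h(0)=b$, the event $h(0)<\min_{s\in S}h(s)$ is exactly the event that $b$ vanishes on $L(A,S)$ (apart from the negligible event $As=0$). The vector $b$ is uniform and independent of $A$, so
$$
\Pr\Bigl[h(0)<\min_{s\in S}h(s)\Bigr]\;\ge\;\E_A\bigl[2^{-|L(A,S)|}\bigr]-\Pr[\exists s\in S:\ As=0].
$$
With $m$ a large multiple of $D$, the subtracted term is negligible. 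So the goal is to find $|S|=N$ with $\E[2^{-|L|}]=\Omega(\log N/N)$.

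\textbf{A convenient model for $L$.} When $A$ is injective on $V=\operatorname{span}(S)$, the image $AV$ is a uniformly random subspace of $\F_2^m$. Put a basis of $AV$ into reduced echelon form with pivots $p_1>p_2>\cdots$. The leading coordinate of $As$ is then $p_j$, where $j$ is the top nonzero coordinate of $\varphi(s)$ and $\varphi:V\to\F_2^{\dim V}$ is a uniformly random linear isomorphism. Hence $|L|$ equals the number of levels $j$ such that $S$ meets $F_j\setminus F_{j-1}$, where $F_0\subset F_1\subset\cdots$ is a uniformly random complete flag in $V$.

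A single subspace $S=V\setminus\{0\}$ gives $|L|=\dim V$ exactly. This yields probability exactly $1/(|S|+1)$, i.e.\ no error, so it serves as a calibration check. Independent-looking sets, on the other hand, give $|L|\approx\log N+O(1)$ with light tails, so again the probability is only $\Theta(1/N)$.

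\textbf{Construction and main obstacle.} The required gain of $\log N$ must therefore come from a multi-scale structure, in analogy with the modular lower bound of \cite{BCFM98,PT_lb_t_wise_min_wise}. My first attempt is a union of subspaces at $\Theta(\log N)$ geometric scales, arranged so that for each scale $i$ there is an event of probability about $2^{-i}$ on which the random flag collapses many of the subspaces onto shared levels. On that event $|L|\le \log N-i$, so it contributes $2^{-i}\cdot 2^{i}/N=1/N$ to $\E[2^{-|L|}]$. If these events are roughly disjoint for $i=1,\dots,\Theta(\log N)$, summing gives $\Omega(\log N/N)$.

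The main obstacle is designing $S$ and proving this per-scale lower-tail estimate for $|L|$ under a random flag while keeping $|S|=N$. The reduction itself is routine. I would first test the candidate $S=\bigcup_{i}\bigl(U_i\setminus\{0\}\bigr)$ for a nested-with-offsets family of subspaces $U_i$ of dimension $\approx\log N-i$ carrying $\approx 2^i$ translates, and compute $\Pr[|L|\le\log N-i]$ via the standard product formula for the probability that a random subspace meets given subspaces trivially.
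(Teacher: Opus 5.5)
\textbf{There is a genuine gap: you never construct $S$.} Your reduction is correct and is the one the paper uses. For nonzero $As$, $h(s)>h(0)$ exactly when $b$ vanishes at the leading coordinate of $As$, so conditioned on $A$ being injective on $V=\operatorname{span}S$ the probability is $2^{-|L|}$. The reduced echelon basis of $AV$, together with a uniformly random isomorphism $V\to AV$, turns $|L|$ into a statistic of a uniformly random complete flag in $V$.

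The proposal stops exactly where the content of the theorem lies. You never exhibit a set $S$ with $|S|=N$, and you never prove a lower-tail estimate for $|L|$. The ``union of subspaces at $\Theta(\log N)$ scales carrying $\approx 2^i$ translates'' is only a heuristic. Its cardinality, the per-scale bound $\Pr[|L|\le\log N-i]\gtrsim 2^{-i}$, and the approximate disjointness of these events are all left open, and you yourself name them as the main obstacle. As written, the existence claim is unproved.

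\textbf{The missing idea: a single affine hyperplane suffices.} No multi-scale design is needed. Take a subspace $V$ of dimension $d=\log N+1$, with output dimension twice the input dimension so that $A$ is injective on $V$ with probability $1-o(1)$. Take a nonzero functional $\varphi$ on $V$ and set $S=\{x\in V:\varphi(x)=1\}=V\setminus\ker\varphi$, which avoids $0$ and has exactly $N$ elements.

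In your flag language, $F_j\setminus F_{j-1}$ is a single coset of $F_{j-1}$, and it meets $S$ iff $\varphi|_{F_j}\neq 0$. Hence $|L|=d-J$ with $J=\max\{j:F_j\subseteq\ker\varphi\}$. Since $F_i$ is a uniformly random $i$-dimensional subspace, $\Pr[J\ge i]=(2^{d-i}-1)/(2^d-1)\approx 2^{-i}$. This is precisely the per-scale event you were trying to engineer, and it gives
\[
\Pr[|L|=j]=\frac{2^{j-1}}{2^d-1},\qquad \E\bigl[2^{-|L|}\bigr]=\sum_{j=1}^{d}\frac{2^{j-1}}{2^d-1}\,2^{-j}=\frac{d}{2(2^d-1)}=\frac{\log N+1}{4N-2}.
\]
Against the ideal value $1/(N+1)$, this is the $\Omega(\log N)$ error.

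The paper does the same computation in coordinates. It writes $\varphi\circ L^{-1}$ in the echelon basis of $AV$ as a uniform nonzero vector $c\in\F_2^{d}$, and the number of distinct leading bits of $A(S)$ is then $\max\{i:c_i=1\}$. Your calibration example $V\setminus\{0\}$ explains why removing only the origin gives no error. Removing a whole hyperplane is what makes the flag collapse with probability about $2^{-i}$ at every scale $i$.
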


We next consider simple tabulation hashing. P\u{a}tra\c{s}cu and Thorup~\cite{PT_tabulation} showed that, despite being only $3$-wise independent, simple tabulation provides strong guarantees for ordinary min-wise hashing, with error only $O\left(\frac{\log^2 N}{N^{1 / c}}\right)$. We show that this favorable behavior does not extend to $k$-min-wise hashing once $k\ge 4$.

\begin{theorem}[Informal Version of \Cref{thm:simple-tabulation}]
For every sufficiently large $N$, there exist disjoint sets $X$ and $Y$ with $|X|=N$ and $|Y|=4$ such that simple tabulation hashing satisfies
$$
\Pr\left[\max_{y \in Y} h(y)<\min_{x \in X} h(x)\right]=\Omega\left(\frac{1}{N^3}\right).
$$
Since the corresponding probability under a uniformly random ordering is $\Theta(1/N^4)$, simple tabulation incurs multiplicative error $\Omega(N)$ for $4$-min-wise hashing, and therefore for every $k\ge 4$.
\end{theorem}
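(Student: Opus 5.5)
The idea is to exploit the linear dependence that simple tabulation creates on a $2\times 2$ "rectangle" of keys. Write keys as $x=(x_1,x_2)\in\Sigma^2$ (the case $c=2$; for larger $c$ fix the remaining characters to a common value, which XORs the same table entry into every key considered and does not affect anything below). Take the range to be $[M]$ with $M$ a power of two, identified with $\{0,1\}^{\log M}$ so that $h(x)=T_1[x_1]\oplus T_2[x_2]$. Pick distinct characters $a,b$ for the first coordinate and $c,d$ for the second, and set
$$Y=\{(a,c),(a,d),(b,c),(b,d)\}.$$
Take $X$ to be any $N$ keys in $(\Sigma\setminus\{a,b\})\times(\Sigma\setminus\{c,d\})$. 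This only needs $(|\Sigma|-2)^2\ge N$, so we work in a universe of size at least $(\sqrt N+2)^2$. The key point is that the hash values of $X$ depend only on table entries disjoint from $T_1[a],T_1[b],T_2[c],T_2[d]$. Hence the events concerning $X$ and concerning $Y$ are independent.

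\textbf{Step 1 (the $Y$-event).} Let $t=M/(2N)$, assumed to be a power of two (adjust $M$ by a constant factor if needed). Define
$$u=T_1[a]\oplus T_2[c],\qquad \Delta_1=T_1[a]\oplus T_1[b],\qquad \Delta_2=T_2[c]\oplus T_2[d].$$
Then the four hash values are
$$h(a,c)=u,\quad h(b,c)=u\oplus\Delta_1,\quad h(a,d)=u\oplus\Delta_2,\quad h(b,d)=u\oplus\Delta_1\oplus\Delta_2.$$
The map $(T_1[a],T_1[b],T_2[c],T_2[d])\mapsto(u,\Delta_1,\Delta_2,T_2[c])$ is a bijection. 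Therefore $u,\Delta_1,\Delta_2$ are independent and uniform on $[M]$. Now consider the event that all three are $<t$, i.e.\ all their bits above position $\log t$ vanish. Since $t$ is a power of two, that set is closed under XOR, so on this event all four values of $Y$ are $<t$. The event has probability $(t/M)^3=1/(8N^3)$.

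\textbf{Step 2 (the $X$-event).} Simple tabulation is $2$-wise (indeed $3$-wise) independent, so each $h(x)$ is uniform. A union bound gives
$$\Pr\left[\min_{x\in X}h(x)<t\right]\le N\cdot t/M=1/2.$$
So with probability at least $1/2$ every $x\in X$ has $h(x)\ge t$.

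\textbf{Step 3 (combine).} By the independence noted above, with probability at least $1/(16N^3)$ we have $\max_Y h<t\le\min_X h$. This event implies the strict inequality, so ties cause no issue. Finally, $\binom{N+4}{4}^{-1}=\Theta(N^{-4})$, which yields multiplicative error $\Omega(N)$. Adding dummy elements to $Y$ does not directly extend this to larger $k$, so for $k\ge4$ we just use that the definition quantifies over all $|Y|\le k$, and $|Y|=4$ is allowed.

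The only delicate point is ensuring that $X$ and $Y$ share no table entries, so that independence holds, while keeping the closure-under-XOR argument exact. Both are handled by the choice of characters and by taking $t$ a power of two, so I expect no real obstacle beyond bookkeeping of $M$ and $N$.
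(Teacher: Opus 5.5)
Your argument is correct, and its core matches the paper's. You use the same $2\times 2$ rectangle $Y$ and the same observation that the four hash values are $u$, $u\oplus\Delta_1$, $u\oplus\Delta_2$, $u\oplus\Delta_1\oplus\Delta_2$. Since a dyadic interval $[0,t)$ is closed under XOR, all four land in it with probability $(t/M)^3$. You also use the same union bound over $X$ at a threshold of about $M/(2N)$.

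The genuine difference is how you decouple $X$ from $Y$. The paper restricts only the first character of keys in $X$. Those keys may therefore reuse position-characters of $Y$ (a second character in $\{0,1\}$, trailing zeros). This is why it needs the ordered-exposure lemma, which groups $X$ by each key's last-exposed position-character, to show $\Pr[\min h(X)\ge p\mid\max h(Y)<p]\ge 1-pN$. You instead make $X$ avoid the four entries $T_1[a],T_1[b],T_2[c],T_2[d]$. Then $(u,\Delta_1,\Delta_2)$ is exactly independent of $h(X)$, and you only need uniformity of each single $h(x)$ plus a union bound. Your route is more elementary. The paper's lemma is more general: it applies to any $X$ whose keys each contain a position-character outside $Q$, and it admits a slightly larger pool of keys for $X$.

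Two bookkeeping remarks.

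First, fixing a common suffix $(z_3,\ldots,z_c)$ for all keys when $c\ge 3$ caps $N$ at $(|\Sigma|-2)^2$, far below the universe size $|\Sigma|^c$. This restriction is unnecessary. Condition on all table entries other than those four. Your bijection still makes $(u,\Delta_1,\Delta_2)$ uniform on $[M]^3$, because the shared term $\bigoplus_{i\ge 3}T_i[z_i]$ is simply absorbed into $u$. So keys of $X$ may carry arbitrary characters in positions $3,\ldots,c$, which gives $(|\Sigma|-2)^2|\Sigma|^{c-2}$ admissible keys.

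Second, $M=2^w$ is fixed by the family and cannot be adjusted. Instead, take $t$ to be the largest power of two with $t\le M/(2N)$; this needs $M\ge 2N$. Then $t>M/(4N)$, and you obtain probability at least $\frac12(4N)^{-3}=1/(128N^3)$, exactly the paper's bound.
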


\subsection{Relation to the Concurrent Preprints}

This paper merges and extends two concurrent and independent preprints: Chen, Huang, and Li, \emph{Constructions of $k$-Min-Wise Hash from Bounded Independence} (arXiv:2607.27157v1), and Wang, \emph{Limited Independence Suffices for Large-$k$ Min-wise Hashing} (arXiv:2607.10255v2). Both preprints independently established the $O(k+\log1/\delta)$ upper bound using different arguments. The present joint version gives a unified presentation and additionally includes the matching lower bound and the results on random affine hashing and simple tabulation hashing.

\subsection{Notation and Basic Definitions}

Let $[n]$ denote $\{1,2,\ldots,n\}$ and $\binom{n}{k}$ denote the binomial coefficient. Unless stated otherwise, all logarithms are to base $2$. For a finite family $\Omega$, we use $x\sim\Omega$ to denote that $x$ is sampled uniformly from $\Omega$. For three real variables $a$, $b$ and $\delta$, $a = b \pm \delta$ means the error between $a$ and $b$ is $|a - b| \le \delta$.

For a hash function $h$ and a set $S$, let
$$
h(S):=\big(h(s):s\in S\big)
$$
denote the \emph{multiset} of hash values of the elements of $S$. For every nonempty set $S$, we use the shorthand
$$
\min h(S):=\min_{s\in S}h(s),
\qquad
\max h(S):=\max_{s\in S}h(s).
$$

We recall the definition of bounded-independent hash families.

\begin{definition}[$t$-wise Independent Hash Family \cite{CW_universal,WC_bounded_independence}]\label{def:t_wise_independence}
    A hash family $\mathcal{H}$ from a domain $\mathcal{U}$ to a range $\mathcal{R}$ is called $t$-wise independent if, for any distinct inputs $x_1,\dots,x_t\in\mathcal{U}$ and any outputs $y_1,\dots,y_t\in\mathcal{R}$,
    $$
    \Pr_{h\sim\mathcal{H}}\big[h(x_1)=y_1,\dots,h(x_t)=y_t\big]
    =
    |\mathcal{R}|^{-t}.
    $$
\end{definition}

Explicit constructions of $t$-wise independent hash family require $\Theta\big(t\cdot(\log|\mathcal{U}|+\log|\mathcal{R}|)\big)$ random bits~\cite{ABI86_construction_t_wise}.
\section{Bounded Independence}\label{sec:bounded_independence}

We now determine the amount of bounded independence needed to guarantee $k$-min-wise hashing. For the upper bound, we fix disjoint sets $X$ and $Y$ and enumerate the possible values of $\theta:=\max h(Y)$. This expresses $\Pr[\max h(Y)<\min h(X)]$ as a weighted sum of the probabilities $\Pr[\min h(X)>\theta]$. We divide the sum according to the size of $\theta$. For small $\theta$, truncated inclusion--exclusion shows that $\Pr[\min h(X)>\theta]$ under bounded independence is close to the corresponding probability under fully random hashing. For large $\theta$, a tail bound for limited independence shows that the total contribution of this range is already small under both distributions. Combining the two estimates leaves only a small additive discrepancy, which is sufficiently small relative to the ideal probability $\binom{|X|+|Y|}{|Y|}^{-1}$ to yield the desired multiplicative error.

For the lower bound, we impose a parity constraint on the number of elements of $Y$ whose hash values fall in one half of the range. This preserves $(k-1)$-wise independence but biases the probability of the event $\max h(Y)<\min h(X)$, proving that $\Omega(k)$-wise independence is necessary. The additional $\Omega(\log1/\delta)$ lower bound follows from the min-wise lower bound of P\u{a}tra\c{s}cu and Thorup~\cite{PT_lb_t_wise_min_wise}. Together, these two bounds give the matching characterization.

\subsection{Upper Bound}\label{sec:upper_bound}

We first prove that $O(k+\log 1/\delta)$-wise independence suffices for $k$-min-wise hashing. This improves the previous upper bound
$O(k\log\log 1/\delta+\log 1/\delta)$ of~\cite{minwise_FPS11}.

\begin{theorem}\label{thm:k_log_1_eps_upper_bound}
    There exists a constant $c>0$ such that, if $M\ge N/\delta$, then every $c\cdot(k+\log 1/\delta)$-wise independent hash family from $[N]$ to $[M]$ is $k$-min-wise with error $3\delta$.
\end{theorem}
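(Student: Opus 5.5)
Fix disjoint $X,Y$ with $|Y|=m\le k$, $|X|=n$, and write $t=c(k+\log 1/\delta)$. We may assume $t$ is even and $t\ge 2m+10$. Conditioning on which element of $Y$ attains the maximum, we write
$$
\Pr[\max h(Y)<\min h(X)]=\sum_{y\in Y}\sum_{\theta\in[M]}\Pr\bigl[h(y)=\theta,\ h(Y\setminus y)<\theta,\ \min h(X)>\theta\bigr],
$$
up to ties inside $Y$. Ties cost $O(m^2/M)$ relative mass, which fits within the error budget since $M\ge N/\delta$. We then condition on the values $h(Y)$. Because $h$ is $t$-wise independent, conditioned on any fixed assignment of $Y$ the restriction to $X$ is still $(t-m)$-wise independent, hence $t/2$-wise independent. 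The target is therefore to compare $\Pr[\min h(X)>\theta\mid h(Y)]$ with $(1-\theta/M)^n$. After averaging over $h(Y)$, which is exactly distributed as under full randomness, this comparison gives the ideal quantity $\approx\binom{n+m}{m}^{-1}$. Write $p=\theta/M$ and let $Z_\theta=\sum_{x\in X}\mathbf 1[h(x)\le\theta]$.

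We split into two regimes according to the threshold $p_0=C(m+\log 1/\delta)/n$.

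\textbf{Small $\theta$} ($p\le p_0$). Bonferroni truncated inclusion–exclusion up to order $t/2$ gives
$$
\Pr[Z_\theta=0]=\sum_{j<t/2}(-1)^j\E\tbinom{Z_\theta}{j}\pm\E\tbinom{Z_\theta}{t/2},
$$
and the moments up to order $t/2$ agree exactly with full randomness. The discrepancy is therefore at most $2\binom{n}{t/2}p^{t/2}\le (2enp/t)^{t/2}$. With $np\le C(m+\log1/\delta)$ and $t=c(k+\log 1/\delta)$ for $c\gg C$, this is $\le 2^{-t/2}\le \delta^{c/2}\cdot 2^{-ck/2}$. We then multiply by the $Y$-part weight $\Pr[\max h(Y)\le \theta]\le p^m$, integrate over $p\le p_0$, and obtain an additive error of order $p_0^{m+1}2^{-t/2}$.

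We compare this with the ideal value $\binom{n+m}{m}^{-1}\ge (m/(e(n+m)))^m$. The ratio is about $(C(m+\log1/\delta)e/m)^{m}\cdot 2^{-t/2}/n$. Using $(1+L/m)^m\le e^{L}$ with $L=\log 1/\delta$, this is at most $\delta\cdot\binom{n+m}{m}^{-1}$ once $c$ is a large constant. This step is the crux: avoiding the $\log\log$ factor requires using the moment bound with its $(np/t)^{t/2}$ decay rather than a union/Chernoff bound per level. That is why we take $t$ linear in $m+\log1/\delta$ and not larger.

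\textbf{Large $\theta$} ($p>p_0$). Here both distributions give negligible contribution. Under $t/2$-wise independence the standard $t$-th moment tail bound yields $\Pr[Z_\theta=0]\le\Pr[|Z_\theta-np|\ge np]\le (O(t/(np)))^{t/4}$, and under full randomness we have $(1-p)^n\le e^{-np}$. Integrating over $p\ge p_0$ against the density $\le m p^{m-1}$ of $\max h(Y)$ gives total contribution at most $\sum_{\ell\ge0}(2^{\ell}p_0)^{m}(O(t/(2^\ell np_0)))^{t/4}$, summed over dyadic ranges of $p$. Since $np_0\ge Ct/c$ makes the base a small constant, and $t/4>m$, this is a geometric series dominated by $p_0^m 2^{-\Omega(t)}$. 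The same calculation as above shows it is at most $\delta\binom{n+m}{m}^{-1}$.

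Finally, we add the three errors: the tie/discretization error versus the continuous ideal (by Indyk's remark with $M\ge N/\delta$), the small-$\theta$ moment error, and the large-$\theta$ tails. Each is at most $\delta$ relative, giving total error $3\delta$.

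The main obstacle is the bookkeeping at the boundary $p_0$. The inclusion–exclusion error must beat $\binom{n+m}{m}^{-1}$ uniformly in $m\le k$ and $n$. We would choose $C$ first, namely large enough that the tail regime decays, and then $c\ge 10C$ so that $(enp_0/t)^{t/2}\le (e/10)^{t/2}$ absorbs the factor $(O(1)\cdot(m+\log1/\delta)/m)^m\le 2^{O(m+\log1/\delta)}$.
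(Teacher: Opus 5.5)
Your architecture is the paper's. You enumerate $\theta=\max h(Y)$, whose law is exactly that under full randomness. You use that $X$ keeps $(t-m)$-wise independence conditionally. You handle small $\theta$ by truncated inclusion--exclusion and large $\theta$ by a moment tail bound summed over dyadic ranges, then compare with $\binom{n+m}{m}^{-1}$.

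\textbf{The gap.} As set up, your two regimes do not meet.
\begin{itemize}
\item Bonferroni truncated at order $r$ has remainder at most $\binom{n}{r}p^r\le(enp/r)^r$, which is nontrivial only for $np<r/e$.
\item The Bellare--Rompel bound at order $r'$ with $A=\mu=np$ is $8\bigl((r'\mu+r'^2)/\mu^2\bigr)^{r'/2}$, which is nontrivial only once $\mu$ exceeds about $1.6r'$.
\end{itemize}
You take $r=r'=t/2$ in both regimes. The first then needs $np\lesssim t/(2e)$ and the second needs $np\gtrsim 0.8t$, so no choice of $p_0$ serves both. Your own constants show this. With $c\ge 10C$ and $m\le k$ you have $np_0=C(m+\log 1/\delta)\le t/10$, so the tail base $O(t/(np_0))$ near $p_0$ is a large constant, not a small one. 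Enlarging $C$ cannot help, since $c$ is chosen after $C$.

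The uncovered window $np=\Theta(t)$ is not negligible a priori. For $m=1$ and $n\gg t$, the window has probability $\Theta(t/n)$ under $h(y)$, a factor $\Theta(t)$ above the ideal $1/(n+1)$. You therefore need $\Pr_t[\min h(X)>\theta]\le\delta/t$ on the window, and neither of your bounds supplies it there.

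\textbf{The missing idea} is to decouple the two moment orders, or otherwise bridge the window.
\begin{itemize}
\item The paper splits at $np_0=\ell/6$ and uses Bonferroni at the full order $\ell$, with base $e\ell/(6(\ell-1))<1/2$. It applies the tail bound only at the reduced order $\ell'=\ell/30$, with base $2\ell'/(\ell/6)=2/5$. It keeps $\ell'/2\ge k+1$ so the dyadic series still converges; this is where the large constant in $\ell=\Theta(k+\log 1/\delta)$ is spent.
\item The paper's alternative argument bounds $\Pr_\ell[\min h(X)>\theta]$ on the window by Bonferroni applied to a subset $X'\subseteq X$ with $|X'|p$ a small constant times $\ell$. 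This gives $e^{-\Omega(\ell)}$.
\item A simpler bridge would also work: monotonicity in $\theta$. Bound the window by the value $e^{-\Omega(t)}$ at the top of your Bonferroni range, and let your order-$t/2$ tail bound take over once $np$ is a large multiple of $t$.
\end{itemize}

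\textbf{Smaller points.}
\begin{itemize}
\item The small-$\theta$ error is $p_0^m$ times the Bonferroni remainder, not $p_0^{m+1}$, so the extra $1/n$ in your ratio is spurious.
\item The bound $\binom{n+m}{m}^{-1}\ge(m/(e(n+m)))^m$ silently loses a factor $(1+m/n)^m$, which is unbounded when $n\ll m$. Treat $n<m$ separately, as the paper does: then $|X\cup Y|\le 2m\le t$ and $h$ is exactly uniform on $X\cup Y$.
\item The tie discussion can be dropped. Summing over $\theta=\max h(Y)$ with weight $(\theta^m-(\theta-1)^m)/M^m$ is already exact.
\end{itemize}
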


Let $X,Y\subseteq[N]$ be disjoint, let $|X|=s$, and, without loss of generality, consider the case $|Y|=k$. We use $\Pr_\ell$ to denote probability under an $\ell$-wise independent hash family, while $\Pr$ without a subscript denotes probability under fully random hashing.

We consider $(\ell+k)$-wise independence. Directly analyzing the joint event $\{\max h(Y)<\min h(X)\}$ is difficult because the hash values of the elements in $X$ and $Y$ may be correlated. We therefore enumerate $\theta:=\max h(Y)$ and obtain
$$
\Pr_{\ell+k}[\max h(Y)<\min h(X)]
=
\sum_{\theta=1}^{M}
\frac{\theta^k-(\theta-1)^k}{M^k}\cdot\Pr_\ell[A_\theta],
$$
$$
\Pr[\max h(Y)<\min h(X)]
=
\sum_{\theta=1}^{M}
\frac{\theta^k-(\theta-1)^k}{M^k}\cdot\Pr[A_\theta],
$$
where
$$
A_\theta=\{\min h(X)>\theta\}
\quad\text{and}\quad
Z_\theta=\big|h(X)\cap[1,\theta]\big|.
$$
Enumerating the hash values of the $k$ elements of $Y$ consumes $k$ degrees of independence, leaving $\ell$-wise independence on $X$.

Set $T=\frac{c\ell M}{s}\in\N$, where $c<1$ is a constant to be determined, and split the summation into two parts:
$$
\Pr_{\ell+k}[\max h(Y)<\min h(X)]
=
\sum_{\theta\le T-1}
\frac{\theta^k-(\theta-1)^k}{M^k}\Pr_\ell[A_\theta]
+
\sum_{\theta\ge T}
\frac{\theta^k-(\theta-1)^k}{M^k}\Pr_\ell[A_\theta].
$$

\paragraph{The First Part.}

We first consider $\Pr_\ell[A_\theta]$ and use the Bonferroni inequalities.

\begin{lemma}[Bonferroni Inequalities]\label{lem:Bonferroni_ineq}
    Let $A_1,\ldots,A_n$ be $n$ events. Then, for every $r=0,1,\ldots,n$,
    $$
    \Pr\left[\bigcap_{i=1}^{n}A_i\right]
    \begin{cases}
        \displaystyle
        \le
        \sum_{t=0}^{r}(-1)^t
        \sum_{\mathcal{I}\subseteq[n],\,|\mathcal{I}|=t}
        \Pr\left[\bigcap_{j\in\mathcal{I}}\overline{A_j}\right],
        & r\text{ is even},\\[5mm]
        \displaystyle
        \ge
        \sum_{t=0}^{r}(-1)^t
        \sum_{\mathcal{I}\subseteq[n],\,|\mathcal{I}|=t}
        \Pr\left[\bigcap_{j\in\mathcal{I}}\overline{A_j}\right],
        & r\text{ is odd}.
    \end{cases}
    $$
\end{lemma}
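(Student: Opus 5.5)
We will prove the Bonferroni inequalities by reducing them to a pointwise identity for each outcome, and then taking expectations. Write $B_j=\overline{A_j}$ and $\mathbf 1_{B_j}$ for its indicator. For a fixed outcome $\omega$, let $m=m(\omega)=|\{j\in[n]:\omega\in B_j\}|$ be the number of complementary events that occur. The key observation is that, for every $t$,
$$
\sum_{\mathcal I\subseteq[n],\,|\mathcal I|=t}\mathbf 1\Bigl[\omega\in\bigcap_{j\in\mathcal I}B_j\Bigr]=\binom{m}{t}.
$$
This holds because a $t$-set $\mathcal I$ contributes exactly when $\mathcal I$ is contained in the set of indices of occurring $B_j$'s. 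Also $\mathbf 1[\omega\in\bigcap_i A_i]=\mathbf 1[m=0]$. By linearity of expectation, it therefore suffices to prove the pointwise inequality
$$
\mathbf 1[m=0]\;\le\;\sum_{t=0}^{r}(-1)^t\binom{m}{t}\quad(r\text{ even}),\qquad
\mathbf 1[m=0]\;\ge\;\sum_{t=0}^{r}(-1)^t\binom{m}{t}\quad(r\text{ odd}),
$$
for every integer $m\ge 0$. Here the convention is $\binom{m}{t}=0$ for $t>m$, and the $t=0$ term is $\Pr[\Omega]=1$.

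For $m=0$ the partial sum equals $1$ for every $r\ge 0$, so both inequalities hold with equality. For $m\ge 1$ we use the standard partial alternating sum identity
$$
\sum_{t=0}^{r}(-1)^t\binom{m}{t}=(-1)^r\binom{m-1}{r}.
$$
This follows by induction on $r$ from Pascal's rule $\binom{m}{t}=\binom{m-1}{t}+\binom{m-1}{t-1}$, since the sum telescopes. The identity gives a nonnegative value when $r$ is even and a nonpositive value when $r$ is odd, while the left side $\mathbf 1[m=0]$ is $0$. This is exactly the required sign pattern, so the pointwise inequality holds in every case.

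Finally, we take expectations of the pointwise inequality over $\omega$. Exchanging the finite sums with the expectation turns $\mathbb E\bigl[\binom{m}{t}\bigr]$ back into $\sum_{|\mathcal I|=t}\Pr[\bigcap_{j\in\mathcal I}\overline{A_j}]$, which yields the stated inequalities for all $0\le r\le n$. There is no real obstacle here. The only step needing care is the telescoping identity, including the boundary case $r\ge m$, where the right side $(-1)^r\binom{m-1}{r}$ is $0$ and the sum is exactly the full inclusion--exclusion value $0$.
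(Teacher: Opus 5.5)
Your proof is correct. You reduce to the pointwise inequality between $\mathbf 1[m=0]$ and $\sum_{t=0}^{r}(-1)^t\binom{m}{t}$, then use the telescoping identity $\sum_{t=0}^{r}(-1)^t\binom{m}{t}=(-1)^r\binom{m-1}{r}$ for $m\ge 1$; this gives exactly the required sign pattern, and you handle the $t=0$ term and the case $m=0$ properly. The paper gives no proof of this lemma and simply cites it as the standard Bonferroni inequalities, so there is no proof to compare against; your indicator-counting argument is the usual textbook one and fully justifies the lemma as the paper uses it.
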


Assume that $\ell$ is even. Then
$$
\sum_{j=0}^{\ell-1}(-1)^j
\sum_{\substack{X'\subseteq X\\|X'|=j}}
\Pr_\ell[h(X')\subseteq[1,\theta]]
\le
\Pr_\ell[A_\theta]
\le
\sum_{j=0}^{\ell}(-1)^j
\sum_{\substack{X'\subseteq X\\|X'|=j}}
\Pr_\ell[h(X')\subseteq[1,\theta]].
$$
Consequently,
$$
\Pr_\ell[A_\theta]
=
\sum_{j=0}^{\ell-2}(-1)^j
\sum_{\substack{X'\subseteq X\\|X'|=j}}
\Pr_\ell[h(X')\subseteq[1,\theta]]
\pm\Delta,
$$
where
$$
\Delta
\le
\sum_{j=\ell-1}^{\ell}
\sum_{\substack{X'\subseteq X\\|X'|=j}}
\Pr_\ell[h(X')\subseteq[1,\theta]]\le
2\binom{s}{\ell-1}
\left(\frac{\theta}{M}\right)^{\ell-1}
\le
2\left(
    \frac{e s\theta}{M(\ell-1)}
\right)^{\ell-1}.
$$

The above analysis involves only subsets of size at most $\ell$. Hence the relevant probabilities are identical under fully random hashing and $\ell$-wise independent hashing, and the same argument gives
$$
\Pr[A_\theta]
=
\sum_{j=0}^{\ell-2}(-1)^j
\sum_{\substack{X'\subseteq X\\|X'|=j}}
\Pr_\ell[h(X')\subseteq[1,\theta]]
\pm\Delta.
$$
Combining the two estimates yields
$$
\Pr_\ell[A_\theta]
=
\Pr[A_\theta]\pm2\Delta
=
\Pr[A_\theta]
\pm
4\left(
    \frac{e s\theta}{M(\ell-1)}
\right)^{\ell-1}.
$$

Substituting this estimate into the summation, we obtain
$$
\begin{aligned}
    \sum_{\theta\le T-1}
    \frac{\theta^k-(\theta-1)^k}{M^k}
    \Pr_\ell[A_\theta]&=
    \sum_{\theta\le T-1}
    \frac{\theta^k-(\theta-1)^k}{M^k}
    \left(
        \Pr[A_\theta]
        \pm
        4\left(
            \frac{e s\theta}{M(\ell-1)}
        \right)^{\ell-1}
    \right)\\
    &=
    \sum_{\theta\le T-1}
    \frac{\theta^k-(\theta-1)^k}{M^k}
    \Pr[A_\theta]
    \pm
    \frac{4T^k}{M^k}
    \left(
        \frac{e sT}{M(\ell-1)}
    \right)^{\ell-1}.
\end{aligned}
$$

\paragraph{The Second Part.}

We again consider $\Pr_\ell[A_\theta]$, this time using a concentration inequality. We use the tail bound from \cite{Bellare_Rompel_94}.

\begin{lemma}[$t$-wise Independence Tail Bound~{\cite[Lemma~2.3]{Bellare_Rompel_94}}]\label{lem:tail_bound_t_wise_BR94}
    Let $t\ge4$ be even, and let $X_1,\ldots,X_n$ be $t$-wise independent random variables taking values in $[0,1]$. Let $X=X_1+\cdots+X_n$, let $\mu=\E[X]$, and let $A>0$. Then
    $$
    \Pr[|X-\mu|\ge A]
    \le
    8\left(
        \frac{t\mu+t^2}{A^2}
    \right)^{t/2}.
    $$
\end{lemma}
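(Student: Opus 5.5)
Since this lemma is quoted from \cite[Lemma~2.3]{Bellare_Rompel_94}, it could simply be cited. To prove it directly, I would use the standard $t$-th moment method. Set $Y_i=X_i-\E[X_i]$ and $p_i=\E[X_i]$, so that $\mu=\sum_i p_i$. Then $|Y_i|\le 1$, $\E[Y_i]=0$, and $\E[Y_i^2]\le \E[X_i^2]\le p_i$. Since $t$ is even, Markov's inequality gives
$$
\Pr[|X-\mu|\ge A]\le \frac{\E\big[(\sum_i Y_i)^t\big]}{A^t}.
$$
So it suffices to show $\E[(\sum_i Y_i)^t]\le 8\,(t\mu+t^2)^{t/2}$. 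The key observation is that $(\sum_i Y_i)^t$ expands into monomials involving at most $t$ distinct variables. By $t$-wise independence, each monomial has the same expectation as under full independence, so I may compute as if the $X_i$ were fully independent.

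Next I expand $\E[(\sum_i Y_i)^t]=\sum_{(i_1,\dots,i_t)}\E[Y_{i_1}\cdots Y_{i_t}]$. Any tuple in which some index appears exactly once contributes zero. I group the remaining tuples by the set $J$ of distinct indices, with $|J|=j\le t/2$, each index appearing at least twice. Because $|Y_i|\le1$, each such term satisfies
$$
|\E[\textstyle\prod_{i\in J}Y_i^{a_i}]|\le \prod_{i\in J}\E[Y_i^2]\le\prod_{i\in J}p_i .
$$
Summing over $j$-subsets gives $\sum_{|J|=j}\prod_{i\in J}p_i\le \mu^j/j!$. Let $N(t,j)$ denote the number of maps from $[t]$ onto a fixed labeled $j$-set in which every label is used at least twice. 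The result is
$$
\E\Big[\big(\textstyle\sum_i Y_i\big)^t\Big]\le \sum_{j=1}^{t/2}\frac{N(t,j)}{j!}\,\mu^j .
$$

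The main obstacle is the combinatorial estimate of $N(t,j)/j!$. I would aim for a bound of the shape $N(t,j)/j!\le C\cdot t^{t-j}$ up to mild factors. One route is to first choose, for each of the $j$ labels (unordered), a pair of positions where it first occurs, and then assign the remaining $t-2j$ positions arbitrarily among $j$ labels. This gives at most $\binom{t}{2j}\frac{(2j)!}{2^j}\cdot j^{t-2j}/j!$, which is roughly $t^{t-j}$ times a factor at most exponential in $t$. With such a bound in hand, the sum becomes
$$
\sum_{j\le t/2} t^{t-j}\mu^j = t^{t/2}\sum_{j\le t/2}\mu^j t^{t/2-j}\le t^{t/2}\,(\mu+t)^{t/2}\cdot(\text{binomial slack}),
$$
which is of the form $(t\mu+t^2)^{t/2}$.

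Getting the constant down to exactly $8$, rather than $C^{t}$, requires the more careful accounting of Bellare and Rompel. One needs to match the counting factor against $\binom{t/2}{j}$ so that the sum collapses into $(t\mu+t^2)^{t/2}$ with only a constant overhead; here $t\ge4$ is used to absorb small-case losses. I would first carry out the argument with a generic constant $C^{t}$, which is all that \Cref{thm:k_log_1_eps_upper_bound} needs since $c$ is arbitrary. After that, I would refine the count by comparing $N(t,j)/j!$ with $\binom{t/2}{j}\,t^{t/2}\,t^{t/2-j}$ term by term.
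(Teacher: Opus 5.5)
The paper does not prove this lemma; it cites \cite{Bellare_Rompel_94}. Your moment-method skeleton is sound. Markov on the $t$-th central moment, the reduction to full independence, the vanishing of tuples with a singleton index, the bound $|\E[\prod_{i\in J}Y_i^{a_i}]|\le\prod_{i\in J}p_i$, and $\E[(X-\mu)^t]\le\sum_{j=1}^{t/2}\frac{N(t,j)}{j!}\mu^j$ are all correct.

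\textbf{The gap is the last step.} What you actually carry out gives $\E[(X-\mu)^t]\le C^t(t\mu+t^2)^{t/2}$, that is, $\Pr[|X-\mu|\ge A]\le\bigl(C^2(t\mu+t^2)/A^2\bigr)^{t/2}$. The factor $C^t$ cannot be absorbed into the prefactor $8$, so this is strictly weaker than the lemma. Your remark that \Cref{thm:k_log_1_eps_upper_bound} only needs a generic constant concerns the theorem, not the statement you were asked to prove. It would also force re-tuning the paper's explicit choices $\ell'=\ell/30$, $T=\ell M/(6s)$ and $2\ell'M/(Ts)=2/5$. The term-by-term refinement, which you defer to ``the more careful accounting of Bellare and Rompel,'' is never verified, so as written the argument is incomplete.

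\textbf{Your own count closes the gap}, in fact with constant $1$ instead of $8$ and without using $t\ge4$. Write $t=2n$ and $j=n-m$ with $0\le m\le n-1$. Your count gives $N(t,j)/j!\le\binom{t}{2j}(2j-1)!!\,j^{t-2j}$. Use $(2n)!/n!\le(2n)^n$, $(2m)!/m!\ge(m+1)^m$, $j\le n$, and $\ln y\le y/e$. Then
\[
\frac{\binom{t}{2j}(2j-1)!!\,j^{t-2j}}{\binom{t/2}{j}\,t^{t-j}}
=\frac{(2n)!\,m!\,j^{2m}}{n!\,(2m)!\,2^{j}\,(2n)^{n+m}}
\le\frac{1}{2^{n}}\Bigl(\frac{n}{m+1}\Bigr)^{m}
\le\Bigl(\frac{e^{1/e}}{2}\Bigr)^{n}<1.
\]
Hence
\[
\E[(X-\mu)^t]\le\sum_{j=1}^{t/2}\binom{t/2}{j}t^{t-j}\mu^j\le t^{t/2}(t+\mu)^{t/2}=(t\mu+t^2)^{t/2},
\]
and Markov's inequality gives the lemma.

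Two remarks on this fix. First, the pair-based count (first two occurrences of each label) is what makes it work. The cruder partition bound $\binom{t}{j}j^{t-j}$ loses a factor of roughly $2^{t/2}/\sqrt{t}$ at $j=t/2$, which no constant absorbs. Second, there are no small-case losses for $t\ge4$ to absorb; the hypothesis is simply inherited from the cited statement.
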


Let $4\le\ell'<\ell$ be an even parameter to be determined. Applying \Cref{lem:tail_bound_t_wise_BR94} with
$t=\ell'$ and $\mu=A=\E[Z_\theta]$, and assuming that
$\E[Z_\theta]\ge\ell'$ for every $\theta\ge T$, gives
$$
\text{both }\Pr_\ell[A_\theta]\text{ and }\Pr[A_\theta]
\le
8\left(
    \frac{2\ell'\E[Z_\theta]}{(\E[Z_\theta])^2}
\right)^{\ell'/2}
=
8\left(
    \frac{2\ell'M}{\theta s}
\right)^{\ell'/2}.
$$

Substituting this bound into the summation, we obtain
$$
\begin{aligned}
    \sum_{\theta\ge T}
    \frac{\theta^k-(\theta-1)^k}{M^k}
    \Pr_\ell[A_\theta]
    &\le
    \Pr_\ell[A_T]
    \sum_{\theta\in[T,2T]}
    \frac{\theta^k-(\theta-1)^k}{M^k}+
    \Pr_\ell[A_{2T}]
    \sum_{\theta\in[2T,4T]}
    \frac{\theta^k-(\theta-1)^k}{M^k}
    +\cdots\\
    &\le
    8\left(
        \frac{2\ell'M}{Ts}
    \right)^{\ell'/2}
    \left(
        \frac{2T}{M}
    \right)^k+
    8\left(
        \frac{2\ell'M}{2Ts}
    \right)^{\ell'/2}
    \left(
        \frac{4T}{M}
    \right)^k
    +\cdots\\
    &=
    8\left(
        \frac{2\ell'M}{Ts}
    \right)^{\ell'/2}
    \left(
        \frac{2T}{M}
    \right)^k
    \left(
        1+
        \left(\frac12\right)^{\ell'/2-k}
        +\cdots
    \right)\\
    &\le
    16\left(
        \frac{2\ell'M}{Ts}
    \right)^{\ell'/2}
    \left(
        \frac{2T}{M}
    \right)^k.
\end{aligned}
$$
The last inequality holds when $\ell'/2-k\ge1$, so that the geometric series is at most $2$.

The same argument applies under fully random hashing. Therefore,
$$
\begin{aligned}
    \sum_{\theta\ge T}
    \frac{\theta^k-(\theta-1)^k}{M^k}
    \Pr_\ell[A_\theta]
    &=
    \sum_{\theta\ge T}
    \frac{\theta^k-(\theta-1)^k}{M^k}
    \Pr[A_\theta]\pm
    32\left(
        \frac{2\ell'M}{Ts}
    \right)^{\ell'/2}
    \left(
        \frac{2T}{M}
    \right)^k.
\end{aligned}
$$

\paragraph{Combining the Estimates.}

Adding the two parts gives
\begin{align*}
    \sum_{\theta=1}^{M}
    \frac{\theta^k-(\theta-1)^k}{M^k}
    \Pr_\ell[A_\theta]
    &=
    \sum_{\theta=1}^{M}
    \frac{\theta^k-(\theta-1)^k}{M^k}
    \Pr[A_\theta]\\
    &\quad\pm
    \left(
        \frac{4T^k}{M^k}
        \left(
            \frac{e sT}{M(\ell-1)}
        \right)^{\ell-1}
        +
        32\left(
            \frac{2\ell'M}{Ts}
        \right)^{\ell'/2}
        \left(
            \frac{2T}{M}
        \right)^k
    \right).
\end{align*}

We now choose the parameters. Define an auxiliary error parameter
$\delta'=\delta/2^z$, where $z\in\N$ will be determined later.

Let $\ell'=\ell/30$, assuming that $60\mid\ell$ so that $\ell'$ is even. Set $c=1/6$, and hence
$$
T=\frac{\ell M}{6s} \in \N.
$$
This guarantees that
$$
\E[Z_\theta]=\frac{\theta s}{M}\ge\ell'
\qquad
\text{for every }\theta\ge T.
$$
Moreover, assuming $\ell\ge40$,
$$
\frac{e sT}{M(\ell-1)}
=
\frac{e\ell}{6(\ell-1)}
<
\frac12,
\qquad
\frac{2\ell'M}{Ts}
=
\frac25
<
\frac12.
$$

If we further assume that
$$
\frac{\ell'}{2}\ge\log\frac{32}{\delta'}
\qquad\text{and}\qquad
\ell-1\ge\log\frac{4}{\delta'},
$$
then
$$
\text{the total additive error}
\le
\frac{T^k}{M^k}\delta'
+
\frac{(2T)^k}{M^k}\delta'.
$$

We next choose $z$ and $\ell$ to satisfy all the constraints above:
\begin{equation}\label{eq:constraints}
    \begin{cases}
        60\mid\ell,\quad \ell\ge120;\\
        \ell\ge60k+60;\\
        \ell\ge60(z+\log 1/\delta+5).
    \end{cases}
\end{equation}
We also require
$$
\frac{T^k}{M^k}\delta'
\le
\frac{(2T)^k}{M^k}\cdot\frac{\delta}{2^z}
\le
\frac{(k/e)^k}{(s+k)^k}\delta
\le
\frac{\delta}{\binom{s+k}{k}}.
$$

It suffices to consider the case $s\ge k$, since the case $s<k$ is handled trivially by $2k$-wise independence. We therefore relax the preceding requirement to
$$
\frac{(2T)^k}{M^k}\cdot\frac{\delta}{2^z}
\le
\frac{(k/e)^k}{(2s)^k}\delta.
$$
Substituting $T=\ell M/(6s)$, this is equivalent to
$$
\left(\frac{2e}{3}\right)^k\ell^k
\le
k^k2^z.
$$
Taking logarithms gives
\begin{equation}\label{eq:z_ell_relation}
    z
    \ge
    k\log\frac{2e}{3}
    +
    k\log\frac{\ell}{k}.
\end{equation}

Let $L=\left\lceil\log1/\delta\right\rceil.$ We choose $z=15k+L+5$ and $\ell=60(z+L+5)$. Thus $\ell=900k+120L+600$, which clearly satisfies all the constraints in~\eqref{eq:constraints}.

To verify~\eqref{eq:z_ell_relation}, we use
$\log(1+x)\le1.45x$ for $x\ge0$:
\begin{align*}
    k\log\left(\frac{\ell}{k}\right)
    &=
    k\log\left(
        900\left(
            1+\frac{120L+600}{900k}
        \right)
    \right)\\
    &=
    k\log900
    +
    k\log\left(
        1+\frac{120L+600}{900k}
    \right)\\
    &\le
    9.82k
    +
    1.45k\left(
        \frac{120L+600}{900k}
    \right)\\
    &\le
    9.82k+0.194L+0.968.
\end{align*}
Adding the constant term
$k\log(2e/3)\le0.86k$, the right-hand side of~\eqref{eq:z_ell_relation} is at most
$$
0.86k+9.82k+0.194L+0.968
=
10.68k+0.194L+0.968.
$$
This is smaller than our choice $z=15k+L+5$, so~\eqref{eq:z_ell_relation} holds.

\paragraph{Conclusion.}

We have shown that the muliplicative error of
$\Pr_{\ell+k}[\max h(Y)<\min h(X)]$ relative to
$\Pr[\max h(Y)<\min h(X)]$ is at most $2\delta$.
It remains to show that a fully random hash function already satisfies the $k$-min-wise property with error $\delta$ when $M\ge N/\delta$, as observed by Indyk~\cite{minwise_Indyk}. The following is the formal proof.

\begin{lemma}\label{lem:k_min_wise_full_random}
    Let $X,Y\subseteq[N]$ be disjoint sets with $|X|=s$ and $|Y|=k$, and let $\sigma:[N]\to[M]$ be a fully random function. Then
    $$
    \frac{1}{\binom{s+k}{k}}
    \left(
        1-\frac{s+k}{M}
    \right)
    \le
    \Pr_\sigma[\max\sigma(Y)<\min\sigma(X)]
    \le
    \frac{1}{\binom{s+k}{k}}.
    $$
    In particular, since $s+k\le N$, the multiplicative error is at most $\delta$ whenever $M\ge N/\delta$.
\end{lemma}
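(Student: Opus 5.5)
Both inequalities come from comparing the fully random function $\sigma$ with a uniformly random ordering of $X\cup Y$, using the event $E$ that the $s+k$ values $\sigma(X\cup Y)$ are pairwise distinct.

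For the \emph{upper bound}, condition on the multiset of values $\sigma(X\cup Y)$. Since $\sigma$ is fully random, the assignment of these values to the elements of $X\cup Y$ is exchangeable. On $E$, the relative ordering of $X\cup Y$ is therefore a uniformly random permutation. Hence $\Pr[\max\sigma(Y)<\min\sigma(X)\mid E]=\binom{s+k}{k}^{-1}$. Off $E$, the event can still occur (ties inside $X$ or inside $Y$ are harmless), so I will not bound it by the conditional probability. Instead I use a symmetry argument. The event says that the $k$ elements of $Y$ are strictly below every element of $X$. For different $k$-subsets $Y'\subseteq X\cup Y$, the events ``$Y'$ strictly below its complement'' are pairwise disjoint: if two different subsets $Y'$ and $Y''$ both held, take $a\in Y'\setminus Y''$ and $b\in Y''\setminus Y'$, and we would get both $\sigma(a)<\sigma(b)$ and $\sigma(b)<\sigma(a)$. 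By exchangeability all $\binom{s+k}{k}$ of these events have the same probability. Their probabilities sum to at most $1$, which gives the upper bound $\binom{s+k}{k}^{-1}$ directly.

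For the \emph{lower bound}, I use
\[
\Pr[\max\sigma(Y)<\min\sigma(X)] \ge \Pr[E]\cdot\binom{s+k}{k}^{-1}.
\]
It then suffices to show $\Pr[E]\ge 1-\frac{s+k}{M}$. A naive union bound over pairs gives only $1-\binom{s+k}{2}/M$, which is far too weak. This is the main obstacle, and it must be handled more carefully.

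To do so, I will bound the collision probability relevant to the target event more cleverly rather than requiring full distinctness. First, only ties between some $y\in Y$ and some $x\in X$ matter for the difference between $\sigma$ and the ideal continuous ordering. A cleaner route is a coupling. Let $\pi$ be a uniformly random order of $X\cup Y$, independent of $\sigma$. Break ties in $\sigma$ according to $\pi$; this yields a uniformly random ordering $\tau$. Then $\Pr[Y \text{ first in } \tau]=\binom{s+k}{k}^{-1}$. The event ``$Y$ first in $\tau$ but not $\max\sigma(Y)<\min\sigma(X)$'' forces $\max\sigma(Y)=\min\sigma(X)$. I will bound this bad event, conditioned on $Y$ first in $\tau$, by $\frac{s+k}{M}$. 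Conditioned on the ordering $\tau$, the sorted values $\sigma$ are those of $s+k$ i.i.d.\ uniform samples on $[M]$. The bad event is that the $k$-th and $(k+1)$-th order statistics coincide. This is the probability that a specific adjacent pair of the sorted i.i.d.\ sample ties. It is at most the expected number of ties among adjacent pairs, divided by the $s+k-1$ adjacent positions under exchangeability of positions, which I expect to bound by $\frac{s+k}{M}$. I would verify this step directly via a union bound: the $(k+1)$-th value equals the $k$-th only if some sample equals another specific one.

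Finally, since $s+k\le N$ and $M\ge N/\delta$, we have $\frac{s+k}{M}\le\delta$, which gives the stated consequence.
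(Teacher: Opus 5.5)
Your upper bound is correct and differs from the paper's. You observe that the $\binom{s+k}{k}$ events ``$Y'$ lies strictly below its complement'' are pairwise disjoint and equiprobable. The paper instead writes $P=\sum_{\theta}\int_{(\theta-1)/M}^{\theta/M}kx^{k-1}(1-\theta/M)^s\,dx$ and compares it pointwise with the Beta integral $\int_0^1 kx^{k-1}(1-x)^s\,dx=\binom{s+k}{k}^{-1}$. Your abandoned first lower-bound route is not merely hard but false, since $\Pr[E]\le e^{-\binom{s+k}{2}/M}$ is far below $1-\frac{s+k}{M}$ when $s+k\approx\sqrt M$.

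\textbf{The gap is in the lower bound.} Your coupling is sound. Because $\tau$ is uniform and independent of the sorted values $V_{(1)}\le\dots\le V_{(s+k)}$, it gives the exact identity $\Pr[\max\sigma(Y)<\min\sigma(X)]=\binom{s+k}{k}^{-1}\Pr[V_{(k)}<V_{(k+1)}]$. So the lemma's lower bound is \emph{equivalent} to $\Pr[V_{(k)}=V_{(k+1)}]\le\frac{s+k}{M}$, which is exactly the step you leave at ``I expect''. The justification you sketch fails: for a discrete uniform sample, adjacent-tie probabilities are not exchangeable across positions. For $s+k=4$, one computes $\Pr[V_{(1)}=V_{(2)}]=\frac{2}{M}-\frac{1}{M^2}$ but $\Pr[V_{(2)}=V_{(3)}]=\frac{2}{M}-\frac{2}{M^2}+\frac{1}{M^3}$. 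Hence for $k=1$ the tie probability exceeds the average, and dividing the expected number of adjacent ties by $s+k-1$ is not an upper bound. The union-bound fallback is also unspecified. Which sample occupies position $k$ is correlated with whether it collides with another sample, so the collision probability is not simply $1/M$ per competitor, and handling that conditioning is the whole difficulty.

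The step can be repaired in the spirit of your idea by moving exchangeability to continuous spacings. Generate $\sigma(z)=\lfloor MU_z\rfloor+1$ from i.i.d.\ $U_z$ uniform on $[0,1)$, and let $\tau$ be the order of the $U_z$. No tie-breaking is needed, and $\tau$ is uniform and independent of the order statistics $U_{(1)}<\dots<U_{(s+k)}$. Then $\{\max\sigma(Y)<\min\sigma(X)\}=\{Y\text{ first in }\tau\}\cap\{\lfloor MU_{(k)}\rfloor<\lfloor MU_{(k+1)}\rfloor\}$, and a tie forces $U_{(k+1)}-U_{(k)}<1/M$. Uniform spacings are exchangeable with $\Pr[U_{(k+1)}-U_{(k)}\ge t]=(1-t)^{s+k}$. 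This yields $\Pr[\max\sigma(Y)<\min\sigma(X)]\ge\binom{s+k}{k}^{-1}(1-1/M)^{s+k}\ge\binom{s+k}{k}^{-1}\bigl(1-\frac{s+k}{M}\bigr)$. That is a valid probabilistic alternative to the paper's integral computation, marginally sharper. As written, however, your proposal does not establish the lower bound.
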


\begin{proof}
    The exact probability
    $P=\Pr_\sigma[\max\sigma(Y)<\min\sigma(X)]$ is
    \begin{align*}
        P
        &=
        \sum_{\theta=1}^{M}
        \frac{\theta^k-(\theta-1)^k}{M^k}
        \left(
            1-\frac{\theta}{M}
        \right)^s\\
        &=
        \sum_{\theta=1}^{M}
        \int_{(\theta-1)/M}^{\theta/M}
        kx^{k-1}
        \left(
            1-\frac{\theta}{M}
        \right)^s
        \dd x.
    \end{align*}

    The ideal probability in the absence of collisions is given by the Beta integral
    $$
    I
    =
    \int_{0}^{1}
    kx^{k-1}(1-x)^s
    \dd x
    =
    \frac{1}{\binom{s+k}{k}}.
    $$

    For every
    $x\in\big[(\theta-1)/M,\theta/M\big]$, we have
    $1-\theta/M\le1-x$, and hence
    $$
    P
    \le
    \sum_{\theta=1}^{M}
    \int_{(\theta-1)/M}^{\theta/M}
    kx^{k-1}(1-x)^s
    \dd x
    =
    I.
    $$

    For the lower bound, we use
    $1-\theta/M\ge1-x-1/M$ and convexity:
    $$
    \left(
        1-\frac{\theta}{M}
    \right)^s
    \ge
    (1-x)^s
    -
    \frac{s}{M}(1-x)^{s-1}.
    $$
    Integrating over all intervals gives
    $$
    P
    \ge
    I
    -
    \frac{s}{M}
    \int_{0}^{1}
    kx^{k-1}(1-x)^{s-1}
    \dd x=
    I
    -
    \frac{s}{M}
    \frac{1}{\binom{s+k-1}{k}}
    =
    I\left(
        1-\frac{s+k}{M}
    \right).
    $$

    Therefore,
    $$
    I\left(
        1-\frac{s+k}{M}
    \right)
    \le
    P
    \le
    I.
    $$
    Thus the multiplicative error is at most $(s+k)/M$. Since $s+k\le N$, it is at most $\delta$ whenever $M\ge N/\delta$.
\end{proof}

Combining the two parts, the multiplicative error relative to
$\binom{s+k}{k}^{-1}$ is at most $3\delta$, while
$\ell+k=\Theta(k+\log 1/\delta)$. This proves
\Cref{thm:k_log_1_eps_upper_bound}.

\paragraph{An Alternative Argument via Subset Restriction.}
The bounded-independence argument above handles the second range by applying the tail inequality at a reduced moment order $\ell'<\ell$. We also observe that the intermediate range can be controlled in a different way, without reducing the moment order, by restricting attention to a suitably chosen subset of the competing set. Write $p_\theta := \frac{\theta}{M}$ and $\mu_\theta := sp_\theta = \E[Z_\theta]$.

For a subset $I \subseteq X$ of size $b$, let $A_\theta(I) := \{\min h(I) > \theta\}$. The Bonferroni calculation above applies equally to $I$ and gives
\begin{equation}\label{eq:subset-bonferroni}
  \left|
    \Pr_\ell[A_\theta(I)] - \Pr[A_\theta(I)]
  \right|
  \le
  4 \left( \frac{ebp_\theta}{\ell-1} \right)^{\ell-1}.
\end{equation}
Thus the error is controlled by the effective mean $bp_\theta$, rather than by the full size $s$.

For small thresholds, say $\mu_\theta \le \alpha(\ell-1)$ for a sufficiently small constant $\alpha > 0$, we simply take $I = X$. Then~\eqref{eq:subset-bonferroni} gives
$$
  \left|
    \Pr_\ell[A_\theta] - \Pr[A_\theta]
  \right|
  \le
  4(e\alpha)^{\ell-1}
  =
  e^{-\Omega(\ell)}.
$$

For the larger intermediate range $\alpha(\ell-1) < \mu_\theta \le R\ell$, where $R > 2$ is a fixed constant, choose a subset $X'_\theta \subseteq X$ of size $b_\theta := \left\lfloor \frac{\alpha(\ell-1)}{p_\theta} \right\rfloor$. Then $b_\theta p_\theta = \Theta(\ell)$, while $A_\theta \subseteq A_\theta(X'_\theta)$. Consequently,
$$
  \Pr_\ell[A_\theta]
  \le
  \Pr_\ell[A_\theta(X'_\theta)] 
  \le
  (1-p_\theta)^{b_\theta} + 4 \left( \frac{eb_\theta p_\theta}{\ell-1} \right)^{\ell-1} 
  =
  e^{-\Omega(\ell)}.
$$
Since also $\Pr[A_\theta] = (1-p_\theta)^s \le e^{-\mu_\theta} = e^{-\Omega(\ell)}$, we obtain
$$
  \left|
    \Pr_\ell[A_\theta] - \Pr[A_\theta]
  \right|
  =
  e^{-\Omega(\ell)}
$$
throughout this intermediate range.

For the remaining thresholds with $\mu_\theta > R\ell$, one may apply the tail inequality using the full moment order $\ell$:
$$
  \Pr_\ell[A_\theta],\, \Pr[A_\theta]
  \le
  8 \left( \frac{2\ell}{\mu_\theta} \right)^{\ell/2}.
$$
The same dyadic decomposition as above then gives a convergent geometric series, since $\ell/2 > k$.

Together with the same weighted summation as in the main argument, this yields an alternative proof: Below the scale $R\ell$, one either applies Bonferroni directly to $X$ or to a suitably chosen subset $X'_\theta$; above that scale, the full-order tail bound applies.

\subsection{Lower Bound}\label{sec:lower_bound}

We now prove that any degree of independence that universally guarantees $k$-min-wise hashing with multiplicative error $\delta$ must be at least $\Omega(k+\log1/\delta)$, matching the upper bound showed before.

We first prove the $\Omega(k)$ lower bound. Fix arbitrary disjoint sets $X,Y$ with $|X|=s$ and $|Y|=k$, where $s$ is any positive integer. For convenience, let the hash values lie in $[0,1)$. The elements of $X$ are hashed independently and uniformly. For $Y$, first sample the hash values of $k-1$ elements, and then sample the last one so that the number of elements of $Y$ landing in the upper half $[1/2,1)$ is even. Once the half containing an element has been determined, its position within that half is uniform. The hash values on $X$ and $Y$ are mutually independent.

This distribution is exactly $(k-1)$-wise independent: the parity constraint affects only the joint distribution of all $k$ elements, while every collection of at most $k-1$ elements remains independent and uniform. This method of imposing a parity constraint on the numbers of elements landing in the two halves was also used by~\cite{PT_lb_t_wise_min_wise} in their lower bound for min-wise hashing.

Consider the event
$$
E=\{\max h(Y)<\min h(X)\},
$$
and let $\mathcal{P}$ denote the event that an even number of elements of $Y$ land in the upper half. The distribution above is precisely the fully random distribution conditioned on $\mathcal{P}$, and
$\Pr[\mathcal{P}]=1/2$. Under fully random hashing,
$$
\Pr[E]
=
\binom{s+k}{k}^{-1}.
$$

Let $N=s+k$, and let $Z$ be the number of elements of $X\cup Y$ whose hash values lie in the lower half $[0,1/2)$. A fully random hash function is equivalently generated in two steps: first sample $N$ independent uniform values and sort them; then assign the $N$ elements of $X\cup Y$ to the resulting positions according to a uniformly random permutation. Since the hash values are independent and identically distributed, this random permutation is independent of the sorted values.

The event $E$ only requires that the first $k$ positions be occupied by the elements of $Y$, whereas $Z$ depends only on how many of the sorted values are below $1/2$. Conditioning on $E$ therefore does not change the distribution of $Z$, and
$$
Z\mid E
\sim
\operatorname{Bin}\left(
    N,\frac12
\right),
$$
where $\operatorname{Bin}(\cdot, \cdot)$ denotes the binomial distribution.

Conditioned on $E$, if $Z=j<k$, then exactly $k-j$ elements of $Y$ lie in the upper half; if $Z\ge k$, then all elements of $Y$ lie in the lower half. It follows that
\begin{align*}
    \Pr[\mathcal{P}\mid E]
    &=
    \sum_{j = k - 2, k - 4, \ldots} \Pr[Z = j] + \sum_{j = k}^N \Pr[Z = j]\\
    &=
    \frac{1}{2^N}
    \left(
        \sum_{j=k-2,k-4,\ldots}
        \binom{N}{j}
        +
        \sum_{j=k}^{N}
        \binom{N}{j}
    \right)\\
    &=
    \frac12\left(
        1+
        \frac{1}{2^{N-1}}
        \sum_{j=k}^{N-1}
        \binom{N-1}{j}
    \right)\\
    &=
    \frac12\left(
        1+
        \Pr\left[
            \operatorname{Bin}\left(
                N-1,\frac12
            \right)
            \ge k
        \right]
    \right).
\end{align*}

Since the $(k-1)$-wise independent distribution above is the fully random distribution conditioned on $\mathcal{P}$, Bayes' rule gives
$$
    \Pr[E\mid\mathcal{P}]
    =
    \frac{
        \Pr[E]\Pr[\mathcal{P}\mid E]
    }{
        \Pr[\mathcal{P}]
    }=
    \left(
        1+
        \Pr\left[
            \operatorname{Bin}\left(
                s+k-1,\frac12
            \right)
            \ge k
        \right]
    \right)
    \binom{s+k}{k}^{-1}.
$$

Thus the multiplicative error of this $(k-1)$-wise independent distribution is exactly
$$
\varepsilon_{k,s}
=
\Pr\left[
    \operatorname{Bin}\left(
        s+k-1,\frac12
    \right)
    \ge k
\right].
$$
This formula holds for every $s$. If $s\ge k$, then
$\varepsilon_{k,s}\ge1/2$, while
$\varepsilon_{k,s}\to1$ as $s\to\infty$. Consequently, for every $\delta<1$, one can choose $X$ sufficiently large that
$\varepsilon_{k,s}>\delta$. Hence $(k-1)$-wise independence does not suffice to guarantee the $k$-min-wise property with error $\delta$, proving an $\Omega(k)$ lower bound.

The construction can also be discretized over a sufficiently large even range $[M]$: first determine whether each element lies in the lower or upper half of the range, and then sample it uniformly within the selected half. As $M$ grows, the probability of the strict-ordering event approaches that in the continuous construction, so the lower bound continues to hold.

On the other hand, for the $\Omega(\log 1 / \delta)$ lower bound, we notice that the definition of $k$-min-wise hashing includes ordinary min-wise hashing as the case $|Y|=1$. Hence we can use the following lower bound of Pătraşcu and Thorup \cite{PT_lb_t_wise_min_wise} for ordinary min-wise hashing.

\begin{theorem}[{\cite[Theorem~3.1]{PT_lb_t_wise_min_wise}}]\label{thm:PT_lb}
    For every set $X$ of $N$ keys and another key $y$, there exists a $t$-wise independent hash family
    $$
    \mathcal{H}
    =
    \{h:X\cup\{y\}\to[0,1)\}
    $$
    such that
    $$
    \Pr_{h\sim\mathcal{H}}
    [h(y)<\min h(X)]
    =
    \frac{1+2^{-O(t)}}{N+1}.
    $$
\end{theorem}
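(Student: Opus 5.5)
We follow the strategy of P\u{a}tra\c{s}cu and Thorup, which uses a parity trick at many dyadic scales. Work with continuous values in $[0,1)$; a discretization argument like the one above transfers the result to a finite range. Give $h(y)$ a uniform value. We then need a $t$-wise independent distribution on $h(X)$ whose probability of having no point below a threshold $\theta$ is inflated relative to $(1-\theta)^N$. After averaging over $\theta=h(y)$, this gives probability $\frac{1+2^{-O(t)}}{N+1}$.

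The construction of $h(X)$ goes as follows. Reveal which of the two halves $[0,1/2)$ and $[1/2,1)$ each key of $X$ lands in, but resample the last key so that the number of keys in the lower half has a prescribed parity. As computed in the $\Omega(k)$ argument above, such a parity constraint preserves $(|S|-1)$-wise independence on the affected set $S$ of keys. To keep $t$-wise independence while still biasing the event, we apply the constraint to the count of keys in a carefully chosen interval. The natural choice is $[0,\theta_0)$ with $\theta_0\approx t/N$, so the count is about $\operatorname{Bin}(N,\theta_0)$ with mean $\Theta(t)$. We condition the count to lie in a residue class chosen so that the count $0$ is favored. More precisely, we replace the binomial law of this count by a distribution that agrees with it in its first $t$ moments but puts extra mass $2^{-O(t)}\cdot(1-\theta_0)^N$ on $0$. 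A single count variable determined this way, with positions then placed uniformly given the count, makes the family $t$-wise independent exactly when the count distribution matches the binomial factorial moments up to order $t$.

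The key steps, in order, are these. First, reduce $t$-wise independence of the keys to matching the first $t$ factorial moments of the count $Z$ in $[0,\theta_0)$, which holds by exchangeability. Second, construct a perturbation $\nu$ of $\operatorname{Bin}(N,\theta_0)$ that is orthogonal to all polynomials of degree $\le t$ and has positive mass at $0$. The canonical choice is the $(t+1)$-st finite difference: take $\nu(j)\propto(-1)^j\binom{t+1}{j}$ shifted suitably. Scale $\nu$ so the perturbed law stays nonnegative, and compute that the relative gain at $0$ is $2^{-O(t)}$, since the binomial masses near the mean $\Theta(t)$ are $2^{-O(t)}$ times the mass at $0$ times polynomial factors. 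Third, average over $h(y)$: the event $h(y)<\min h(X)$ gains probability at least $\Pr[h(y)<\theta_0]\cdot(\text{gain at }Z=0)\cdot(\ldots)$. We must show that this gain is not cancelled by the effect of the perturbation on the values $Z\ge1$ for smaller thresholds. Within $[0,\theta_0)$ the positions are uniform given $Z$, so the event depends on $Z$ through $(1-h(y)/\theta_0)^Z$, which is not a polynomial in $Z$. The net change is therefore $\E_\nu[(1-u)^Z]$ averaged over $u$, and we need this to be $\ge 2^{-O(t)}/(N+1)$.

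The main obstacle is the third step: choosing the signed measure $\nu$ so that the averaged functional $\int_0^{\theta_0}\E_\nu[(1-h/\theta_0)^Z]\,dh$ stays of one sign and has magnitude $2^{-O(t)}$ relative to $1/(N+1)$. My first attempt would be to pick $\nu$ supported on $\{0,\dots,t+1\}$ with alternating signs, and to evaluate the functional exactly as $\theta_0\sum_j\nu(j)/(j+1)$, which is a $(t+1)$-st difference of $1/(j+1)$. This equals $\pm 1/\bigl((t+2)\binom{t+1}{j}\bigr)$-type quantities, giving an explicit nonzero term of size $2^{-O(t)}$ after normalization. Choosing the overall sign positive then yields $\frac{1+2^{-O(t)}}{N+1}$. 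A nonnegativity check on the perturbed binomial, using that its masses on $\{0,\dots,t+1\}$ at mean $\Theta(t)$ are all within $2^{O(t)}$ of each other, completes the argument.
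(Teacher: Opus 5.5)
The paper gives no proof of this statement. It imports it from P\u{a}tra\c{s}cu and Thorup, whose construction, as the paper remarks in its $\Omega(k)$ argument, imposes a parity constraint on how many keys land in the two halves of a range. Your moment-matching route is a correct, self-contained alternative, and it contains that trick as a special case. Conditioning $\operatorname{Bin}(t+1,\tfrac12)$ on an even count is exactly adding $2^{-(t+1)}(-1)^j\binom{t+1}{j}$, i.e.\ your $(t+1)$-st finite-difference perturbation with the largest weight that keeps the law nonnegative. What your formulation buys is this: by exchangeability ($\Pr[A\subseteq B]=\E\binom{Z}{|A|}/\binom{N}{|A|}$ for a uniformly random $Z$-subset $B$), you can perturb the count of all $N$ keys at once, at the natural scale $\theta_0=t/N$ rather than at scale $\tfrac12$, and you get the bias in closed form. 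The parity version, in exchange, is a purely combinatorial statement about one hash bit and needs no moment bookkeeping.

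Several steps need tightening, though none is a real gap.

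\textbf{The cancellation step.} What you call the main obstacle resolves exactly if you take $\nu(j)=\epsilon(-1)^j\binom{t+1}{j}$ on $\{0,\dots,t+1\}$ with no shift. Conditioning on $Z$ and $u=h(y)$ gives $\Pr[E]=\E[\theta_0/(Z+1)]+\Pr[Z=0]\cdot\frac{1-\theta_0}{N+1}$, which equals $\frac{1}{N+1}$ under the binomial. Moreover, $\sum_j(-1)^j\binom{t+1}{j}\frac{1}{j+1}=\int_0^1(1-x)^{t+1}\,dx=\frac{1}{t+2}$ is a single positive number, not the expression you wrote. Hence
\[
\Pr_\mu[E]-\frac{1}{N+1}=\epsilon\left(\frac{\theta_0}{t+2}+\frac{1-\theta_0}{N+1}\right).
\]
The second term comes from $h(y)\ge\theta_0$, where $E$ forces $Z=0$. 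The first is what survives of the perturbation when $h(y)<\theta_0$. Both have the same sign, so nothing cancels and there is no sign to choose.

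\textbf{Nonnegativity.} Your remark that masses near the mean are $2^{-O(t)}$ times the mass at $0$ is backwards. But you only need a lower bound on $\Pr[\operatorname{Bin}(N,\theta_0)=j]$ for $j\le t+1$. If $t\le N/2$, then $\binom{N}{j}\theta_0^j\ge 4^{-j}$ and $(1-\theta_0)^N\ge e^{-2t}$. So $\epsilon=8^{-(t+1)}e^{-2t}$ is admissible, and the relative gain is at least $\epsilon(1-\theta_0)\ge\epsilon/2=2^{-O(t)}$.

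\textbf{Range of $t$.} The construction needs $t+1\le N$, which is harmless. For $N/2<t<N$, use $\theta_0=\tfrac12$, where $\epsilon=2^{-N}\ge 4^{-t}$ works. The lower bound is only of interest for $t$ below $N$; for $t>N$ the family would be fully independent.

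\textbf{The residue-class description.} Discard it from your first paragraph. Conditioning $\operatorname{Bin}(N,\theta_0)$ on its parity already changes the one-key marginals when $\theta_0\ne\tfrac12$, so only the moment-matched law is $t$-wise independent.
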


This shows that achieving multiplicative error $\delta$ requires $t=\Omega(\log1/\delta)$. Combining this with the $\Omega(k)$ lower bound gives
$$
t
=
\Omega\left(
    \max\left\{
        k,\log\frac{1}{\delta}
    \right\}
\right)
=
\Omega\left(
    k+\log\frac{1}{\delta}
\right).
$$
Equivalently, when $\delta<2^{-k}$, the term $\log 1/\delta$ dominates, whereas when $\delta\ge2^{-k}$, the term $k$ dominates.
\section{Random Affine Functions}

\newcommand{\msb}{\mathsf{msb}}

We next examine a concrete bounded-independent family. For the classical one-dimensional modular affine family
$$
h(x)=(ax+b)\bmod p,
$$
\cite{BCFM98} and \cite{PT_lb_t_wise_min_wise} showed that the multiplicative error for min-wise hashing can be $\Omega(\log N)$. We consider its natural higher-dimensional analogue over $\F_2$,
$$
h(x)=Ax\oplus b,
$$
where $A$ and $b$ are chosen uniformly. Although this family is pairwise independent, we show that it still incurs multiplicative error $\Omega(\log N)$ for min-wise hashing. Consequently, it also fails to satisfy the stronger $k$-min-wise guarantee.

Throughout this section, vectors in $\F_2^\ell$ are ordered lexicographically, with the first coordinate being the most significant bit. For every nonzero $y\in\F_2^\ell$, let $\msb(y)$ denote the index of its first nonzero coordinate.

\begin{theorem}\label{thm:min_wise_random_affine_not_good}
Let $h(x)=Ax\oplus b$ be a random affine function from $\F_2^u$ to $\F_2^\ell$, where $\ell=2u$, and where the matrix $A\in\F_2^{\ell\times u}$ and the vector $b\in\F_2^\ell$ are chosen uniformly and independently. For every integer $1\le m\le u-1$, there exists a set $S\subseteq\F_2^u\setminus\{0\}$ of size $N=2^m$ such that
$$
\Pr_{A,b}[h(0)<\min h(S)]
\ge
(1-2^{-u})\frac{\log N+1}{4N-2}
=
\Theta\left(\frac{\log N}{N}\right).
$$
\end{theorem}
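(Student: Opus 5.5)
The plan is to take $S$ to be the nonzero coset of a hyperplane inside an $(m+1)$-dimensional subspace. Fix a linear subspace $W\subseteq\F_2^u$ of dimension $m+1\le u$, a hyperplane $H\subset W$, and a vector $w\in W\setminus H$, and set $S:=w+H$. Then $|S|=2^m=N$ and $0\notin S$.

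\textbf{Step 1: rewrite the event.} Since $h(0)=b$ and $h(s)=As\oplus b$, for a nonzero $y=As$ we have $b<b\oplus y$ if and only if $b_{\msb(y)}=0$. If $As=0$ the two hash values tie and the strict event fails. Hence
$$
\{h(0)<\min h(S)\}=\{As\neq0\ \forall s\in S\}\cap\{b_{\msb(As)}=0\ \forall s\in S\}.
$$

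\textbf{Step 2: condition on $A$ being injective on $W$.} The probability that $A|_W$ fails to be injective is at most $\sum_{0\ne x\in W}\Pr[Ax=0]\le 2^{m+1}2^{-\ell}\le 2^{-u}$, using $\ell=2u$ and $m+1\le u$. Condition on injectivity. Then the image $U:=AW$ has dimension $m+1$ and $A(S)=Aw+AH$ is the nontrivial coset of the hyperplane $AH$ in $U$, so every $As$ is nonzero. Take a reduced echelon basis $v_1,\dots,v_{m+1}$ of $U$ with pivots $p_1<\dots<p_{m+1}$, so that $\msb(\sum_i c_iv_i)=p_j$ where $j$ is the smallest index with $c_j=1$. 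Write $A(S)=\{y\in U:\varphi(y)=1\}$ for a nonzero functional $\varphi$ with $\varphi_i:=\varphi(v_i)$, and let $t$ be the largest index with $\varphi_t=1$. The pivot $p_j$ is the msb of some element of $A(S)$ if and only if $j\le t$. Indeed, for $j\le t$ we can take $c_j=1$, $c_i=0$ for $i<j$, and adjust $c_t$ (or use $c_j$ alone when $j=t$) to make $\varphi=1$; for $j>t$, every $y$ with $c_1=\dots=c_{j-1}=0$ has $\varphi(y)=0$. So the event becomes exactly $b_{p_1}=\dots=b_{p_t}=0$. Since $b$ is independent of $A$, this has conditional probability $2^{-t}$.

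\textbf{Step 3: the distribution of $t$, which is the main point to check.} Conditioned on $A|_W$ being injective, $A|_W$ is a uniformly random injective linear map. Precomposing with automorphisms of $W$ preserves this distribution, and those automorphisms act transitively on pairs (hyperplane, complementary coset). Hence, given $U$, the hyperplane $AH$ is uniform among the $2^{m+1}-1$ hyperplanes of $U$, so $\varphi$ is uniform over nonzero functionals. The number of nonzero $(\varphi_1,\dots,\varphi_{m+1})$ with last $1$ at position $j$ is $2^{j-1}$, so $\Pr[t=j]=2^{j-1}/(2^{m+1}-1)$. Therefore
$$
\Pr[\,\cdot\mid A|_W\text{ injective}]=\sum_{j=1}^{m+1}\frac{2^{j-1}}{2^{m+1}-1}\,2^{-j}=\frac{m+1}{2(2N-1)}=\frac{\log N+1}{4N-2}.
$$
Multiplying by $\Pr[A|_W\text{ injective}]\ge1-2^{-u}$ gives the bound in \Cref{thm:min_wise_random_affine_not_good}. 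The step needing the most care is this symmetry claim for $\varphi$ together with the pivot characterization in Step 2; the remaining steps are routine bookkeeping.
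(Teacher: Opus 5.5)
Your proposal is correct and follows the paper's proof essentially step for step. You use the same $S$ (the nontrivial coset of a hyperplane in an $(m+1)$-dimensional subspace), the same conditioning on injectivity of $A|_W$ with the same union bound, the same symmetry argument making the image functional uniform over nonzero functionals of $U$, and the same reduced-echelon-basis computation of $\Pr[t=j]=2^{j-1}/(2^{m+1}-1)$. Your explicit treatment of ties in Step~1 and your explicit verification of the pivot characterization simply spell out points the paper leaves brief.
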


\begin{proof}
Let $V\subseteq\F_2^u$ be an $(m+1)$-dimensional subspace, and fix a nonzero linear functional $\varphi\in V^*$, where $V^*$ denotes the dual space of $V$. Define
$$
S:=\{x\in V:\varphi(x)=1\}.
$$
Since $\varphi$ is nonzero, $S$ is an affine hyperplane in $V$. Hence $|S|=2^m=N$, and $0\notin S$.

Let $\mathcal E$ denote the event that the restriction $A|_V$ is injective. For every nonzero vector $x\in V$, we have $\Pr[Ax=0]=2^{-\ell}=2^{-2u}$. Therefore, by the union bound and the assumption $m+1\le u$,
$$
\Pr[\mathcal E]
\ge
1-(2^{m+1}-1)2^{-2u}
\ge
1-2^{-u}.
$$

We now analyze the distribution conditioned on the event $\mathcal E$. Let $U:=A(V)$, and write $L:=A|_V:V\to U$. Then $L$ is a linear isomorphism, while $\mathcal Y:=A(S)$ is an affine hyperplane in $U$ that does not contain $0$.

Furthermore, conditioned on $\mathcal E$ and on the image space $U$, the map $L$ is uniformly distributed over all linear isomorphisms from $V$ to $U$. Therefore, the nonzero linear functional
$$
\psi:=\varphi\circ L^{-1}\in U^*
$$
is uniformly distributed over $U^*\setminus\{0\}$, and $\mathcal Y=\{y\in U:\psi(y)=1\}$.

Now fix $A$ and consider the random shift $b$. For every nonzero vector $y\in\F_2^\ell$, the vectors $b$ and $b\oplus y$ first differ at coordinate $\msb(y)$, and hence
$$
b<b\oplus y
\quad\Longleftrightarrow\quad
b_{\msb(y)}=0.
$$
Let $\msb(\mathcal Y):=\{\msb(y):y\in\mathcal Y\}$. Since the coordinates of $b$ are mutually independent and uniform,
$$
\Pr_b[h(0)<\min h(S)\mid A]
=
\Pr_b[b<b\oplus y,\ \forall y\in\mathcal Y\mid A]
=
\Pr_b[b_{\msb(y)}=0,\ \forall y\in\mathcal Y\mid A]
=
2^{-|\msb(\mathcal Y)|}.
$$

Let $e_1,\ldots,e_{m+1}$ be the reduced echelon basis of $U$, uniquely determined by $U$, satisfying
$$
\msb(e_1)<\cdots<\msb(e_{m+1}).
$$
For every $y=\sum_{i=1}^{m+1}\alpha_i e_i$, if $t$ is the smallest index such that $\alpha_t=1$, then $\msb(y)=\msb(e_t)$.

With respect to this basis, write $\psi$ as
$$
\psi\left(\sum_{i=1}^{m+1}\alpha_i e_i\right)
=
\sum_{i=1}^{m+1}c_i\alpha_i.
$$
Since $\psi$ is uniformly distributed over $U^*\setminus\{0\}$, the coefficient vector $c=(c_1,\ldots,c_{m+1})$ is uniformly distributed over $\F_2^{m+1}\setminus\{0\}$. Moreover,
$$
\mathcal Y
=
\left\{
\sum_{i=1}^{m+1}\alpha_i e_i:
\sum_{i=1}^{m+1}c_i\alpha_i=1
\right\}.
$$

Fix $t\in[m+1]$. The index $\msb(e_t)$ belongs to $\msb(\mathcal Y)$ if and only if there exists a coordinate vector satisfying $\alpha_1=\cdots=\alpha_{t-1}=0$ and $\alpha_t=1$ such that
$$
c_t+\sum_{i=t+1}^{m+1}c_i\alpha_i=1.
$$
This equation is solvable if and only if $c_t,\ldots,c_{m+1}$ are not all zero.

Let $T:=\max\{i:c_i=1\}$. The preceding condition is equivalent to $t\le T$, and hence $|\msb(\mathcal Y)|=T$. Since $c$ is uniformly distributed over all nonzero vectors,
$$
\Pr[T=j\mid\mathcal E]
=
\frac{2^{j-1}}{2^{m+1}-1},
\qquad j\in[m+1].
$$
Therefore,
$$
\E[2^{-|\msb(\mathcal Y)|}\mid\mathcal E]
=
\sum_{j=1}^{m+1}
\frac{2^{j-1}}{2^{m+1}-1}2^{-j}
=
\frac{m+1}{2(2^{m+1}-1)}.
$$
Finally,
$$
\Pr[h(0)<\min h(S)]
\ge
\Pr[\mathcal E]\cdot
\E[2^{-|\msb(\mathcal Y)|}\mid\mathcal E]
\ge
(1-2^{-u})\frac{m+1}{2(2^{m+1}-1)}
=
(1-2^{-u})\frac{\log N+1}{4N-2}.
$$
\end{proof}
\section{Simple Tabulation Hashing}

Simple tabulation hashing is one of the standard constructions used for pairwise-independent hashing; in fact, it is $3$-wise independent. P\u{a}tra\c{s}cu and Thorup~\cite{PT_tabulation} showed that, despite its limited independence, simple tabulation provides strong guarantees for ordinary min-wise hashing. We show that this positive behavior does not extend to $k$-min-wise hashing. More precisely, for every $k\ge 4$, simple tabulation incurs multiplicative error $\Omega(N)$. Our result can therefore be viewed as a negative extension of their analysis from ordinary min-wise hashing to its $k$-min-wise generalization.

Recall that a simple tabulation hash function $h:\Sigma^c\to\{0,1\}^w$ is defined by
$$
h(x_1,\ldots,x_c)=T_1[x_1]\oplus\cdots\oplus T_c[x_c],
$$
where all table entries $T_i[a]$ are mutually independent and uniformly distributed in $\{0,1\}^w$. We identify a hash value with the corresponding $w$-bit binary fraction in $[0,1)$. A pair $\alpha=(i,a)\in[c]\times\Sigma$ is called a \emph{position-character}, and we write $T[\alpha]:=T_i[a]$. For a key $x=(x_1,\ldots,x_c)$, let
$$
P(x):=\{(i,x_i):i\in[c]\}
$$
denote the set of its position-characters.

\begin{theorem}\label{thm:simple-tabulation}
Let $c\ge 2$ and let the hash range have size $M=2^w$. For every positive integer $N$ satisfying
$$
M\ge 2N
\qquad\text{and}\qquad
(|\Sigma|-2)|\Sigma|^{c-1}\ge N,
$$
there exist disjoint sets $X,Y\subseteq\Sigma^c$ with $|X|=N$ and $|Y|=4$ such that
$$
\Pr\left[\max h(Y)<\min h(X)\right]\ge\frac{1}{128N^3}.
$$
Consequently,
$$
\frac{\Pr\left[\max h(Y)<\min h(X)\right]}{\binom{N+4}{4}^{-1}}>\frac{N}{3072}.
$$
In particular, simple tabulation hashing incurs multiplicative error $\Omega(N)$ for $4$-min-wise hashing.
\end{theorem}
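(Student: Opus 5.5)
The plan is to exploit the well-known linear dependence of simple tabulation on a ``rectangle'' of four keys. I will fix two distinct characters $a,a'\in\Sigma$ for position $1$ and two distinct characters $b,b'\in\Sigma$ for position $2$, and fix arbitrary characters $z_3,\ldots,z_c$ for the remaining positions. I then set
$$
Y=\{(a,b,z),(a,b',z),(a',b,z),(a',b',z)\},
$$
where $z=(z_3,\ldots,z_c)$. Every position-character appears an even number of times in $Y$, so
$$
h(y_1)\oplus h(y_2)\oplus h(y_3)\oplus h(y_4)=0 .
$$
The set $X$ will be any $N$ keys whose first character lies in $\Sigma\setminus\{a,a'\}$. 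This is possible because $(|\Sigma|-2)|\Sigma|^{c-1}\ge N$, and it makes $X$ and $Y$ disjoint.

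\textbf{Step 1 (the four keys of $Y$ behave like three).} Let $j=\lceil\log_2(2N)\rceil$. The assumption $M\ge 2N$ gives $j\le w$, so $[0,2^{-j})$ is a nonempty dyadic interval of the range, and $2^{-j}\ge 1/(4N)$. A value lies in $[0,2^{-j})$ exactly when its first $j$ bits are zero, and this property is closed under XOR. Hence if $h(y_1),h(y_2),h(y_3)<2^{-j}$, then automatically $h(y_4)=h(y_1)\oplus h(y_2)\oplus h(y_3)<2^{-j}$. Since simple tabulation is $3$-wise independent, $h(y_1),h(y_2),h(y_3)$ are independent and uniform. Therefore
$$
\Pr\bigl[\max h(Y)<2^{-j}\bigr]=2^{-3j}\ge (4N)^{-3}.
$$
This is the source of the gain: a fully random function would give $2^{-4j}$ here.

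\textbf{Step 2 (decoupling $X$ from $Y$).} The event $\{\max h(Y)<2^{-j}\}$ depends only on table entries $T_1[a]$, $T_1[a']$, $T_2[b]$, $T_2[b']$ and $T_i[z_i]$. For each $x\in X$, the entry $T_1[x_1]$ has $x_1\notin\{a,a'\}$, so it is a uniform value independent of all of these. Hence, conditioned on any values of the $Y$-relevant entries, each $h(x)$ is still uniform on $[0,1)$. The keys of $X$ may share characters with each other, but that does not matter here. A union bound gives
$$
\Pr\bigl[\min h(X)<2^{-j}\,\big|\,\max h(Y)<2^{-j}\bigr]\le N\cdot 2^{-j}\le \tfrac12 .
$$
On the complementary event we have $\max h(Y)<2^{-j}\le\min h(X)$. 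Combining with Step 1,
$$
\Pr\bigl[\max h(Y)<\min h(X)\bigr]\ge \tfrac12 (4N)^{-3}=\frac{1}{128N^3}.
$$

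\textbf{Step 3 (the ratio).} For $N\ge 1$ we have $\binom{N+4}{4}>N^4/24$. Hence
$$
\frac{\Pr\bigl[\max h(Y)<\min h(X)\bigr]}{\binom{N+4}{4}^{-1}}>\frac{N^4}{24\cdot128N^3}=\frac{N}{3072}.
$$
This shows multiplicative error $\Omega(N)$. The bound transfers to every $k\ge4$, since a $k$-min-wise family must in particular handle sets $Y$ of size $4$.

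The only delicate step is Step 2: I must check that conditioning on the $Y$-event leaves each $h(x)$ uniform, even though the $X$-keys share position-characters with $Y$ in positions $2,\ldots,c$. The fresh first character of every key in $X$ is exactly what guarantees this. The remaining points are routine: the dyadic/XOR closure, and the bookkeeping that $j\le w$.
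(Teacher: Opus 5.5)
Your proof is correct, and its overall structure matches the paper's. You use the same rectangle $Y$, the same $X$ with a fresh first character, and the same dyadic threshold $p=2^{-\lceil\log_2(2N)\rceil}$. The same XOR-closure argument then gives $\Pr[\max h(Y)<p]=p^3$, which is \Cref{lem:tabulation-rectangle}. Your Step 3 arithmetic also matches.

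The difference is in the decoupling step. The paper proves \Cref{lem:tabulation-ordered-exposure} by ordering position-characters so that those of $Y$ come first. It then partitions $X$ into groups $G_\alpha$ according to the $\prec$-largest position-character of each key, and exposes table entries group by group. This yields $\prod_\alpha(1-p|G_\alpha|)\ge 1-pN$.

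You instead treat each key $x$ separately. The entry $T_1[x_1]$ is independent of the $Y$-entries and of the other entries of $x$, so $\Pr[h(x)<p\mid \max h(Y)<p]=p$ exactly, and a plain union bound gives $1-pN$.

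Your route is shorter and loses nothing here, because the paper immediately relaxes its product bound to $1-pN$. In fact, the same one-line argument proves the paper's lemma in its full generality: pick any $\alpha\in P(x)\setminus Q$ in place of the first position-character. The ordered exposure buys only a marginally sharper intermediate product bound and a P\u{a}tra\c{s}cu--Thorup-style template for finer analyses. Neither is needed for this theorem.
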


The proof consists of two lemmas. The first exploits the xor dependency among four keys forming a combinatorial rectangle. The second uses an ordered exposure of the position-characters to control the hash values of the remaining keys.

\begin{lemma}\label{lem:tabulation-rectangle}
Fix two distinct characters $0,1\in\Sigma$, and define
$$
y_{ab}:=(a,b,0,\ldots,0),\qquad a,b\in\{0,1\}.
$$
Let $Y:=\{y_{00},y_{01},y_{10},y_{11}\}$. For every dyadic threshold $p=2^{-r}$ with $r\le w$,
$$
\Pr[\max h(Y)<p]=p^3.
$$
\end{lemma}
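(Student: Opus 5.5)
The plan is to reduce the event $\{\max h(Y)<p\}$ to a statement about the top $r$ bits of three independent uniform vectors. Write $u:=T_1[0]\oplus T_1[1]$, $v:=T_2[0]\oplus T_2[1]$, and set
$$
z:=h(y_{00})=T_1[0]\oplus T_2[0]\oplus\bigoplus_{i=3}^{c}T_i[0].
$$
Then
$$
h(y_{00})=z,\qquad h(y_{10})=z\oplus u,\qquad h(y_{01})=z\oplus v,\qquad h(y_{11})=z\oplus u\oplus v.
$$
This is the xor dependency of the rectangle: the four values always xor to $0$.

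\textbf{Independence of $z,u,v$.} The table entries $T_1[0],T_1[1],T_2[0],T_2[1]$, together with the entries $T_i[0]$ for $i\ge3$, are mutually independent and uniform in $\{0,1\}^w$. The map
$$
(T_1[0],T_1[1],T_2[0],T_2[1])\mapsto (T_1[0],u,T_2[0],v)
$$
is a bijection, so $T_1[0],u,T_2[0],v$ are independent and uniform. Moreover $z=T_1[0]\oplus(\text{stuff independent of }u,v)$, so $z$ is uniform and independent of $(u,v)$. Hence $z,u,v$ are i.i.d.\ uniform on $\{0,1\}^w$.

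\textbf{Dyadic threshold.} Since $p=2^{-r}$ with $r\le w$, a $w$-bit fraction is $<p$ if and only if its first $r$ bits are all zero. Let $\pi$ be the projection onto the first $r$ coordinates; it is linear over $\F_2$. The event $\max h(Y)<p$ is then the conjunction
$$
\pi(z)=0,\quad \pi(z)\oplus\pi(u)=0,\quad \pi(z)\oplus\pi(v)=0,\quad \pi(z)\oplus\pi(u)\oplus\pi(v)=0.
$$
By linearity this is equivalent to $\pi(z)=\pi(u)=\pi(v)=0$; the fourth equation is implied by the other three. The three events are independent, each with probability $2^{-r}=p$, so the probability is $p^3$.

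\textbf{Main subtlety.} The argument relies on both $\pi$ being linear (which is exactly why the threshold must be dyadic) and on verifying that $z,u,v$ are jointly independent. I expect no real obstacle beyond a careful statement of the change of variables.
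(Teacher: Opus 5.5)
Your proposal is correct and follows essentially the same argument as the paper: the xor dependency of the rectangle, independence of three of the four hash values, and the fact that for a dyadic threshold the fourth constraint is implied by the other three. The only difference is that you check the independence of $(z,u,v)$ by an explicit change of variables, whereas the paper gets the independence of $H_{00},H_{01},H_{10}$ directly from the $3$-wise independence of simple tabulation.
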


\begin{proof}
Write $H_{ab}:=h(y_{ab})$. Every position-character appearing among the four keys occurs an even number of times. Therefore,
$$
H_{00}\oplus H_{01}\oplus H_{10}\oplus H_{11}=0,
$$
and hence
$$
H_{11}=H_{00}\oplus H_{01}\oplus H_{10}.
$$
Moreover, $H_{00},H_{01},H_{10}$ are independent and uniformly distributed, since simple tabulation is $3$-wise independent.

The event $H_{ab}<p$ means precisely that the first $r$ bits of $H_{ab}$ are all zero. Thus, if $H_{00},H_{01},H_{10}<p$, then the first $r$ bits of their xor are also zero, and hence $H_{11}<p$. Consequently,
$$
\{\max h(Y)<p\}=\{H_{00}<p,\ H_{01}<p,\ H_{10}<p\}.
$$
Since $H_{00},H_{01},H_{10}$ are independent and uniform,
$$
\Pr[\max h(Y)<p]=p^3.
$$
\end{proof}

\begin{lemma}\label{lem:tabulation-ordered-exposure}
Let $Y\subseteq\Sigma^c$, and let
$$
Q:=\bigcup_{y\in Y}P(y)
$$
be the set of position-characters appearing in $Y$. Let $X\subseteq\Sigma^c$ be a set of $N$ keys such that $P(x)\nsubseteq Q$ for every $x\in X$. Then, for every dyadic threshold $p=2^{-r}$ satisfying $r\le w$ and $pN\le 1$,
$$
\Pr[\min h(X)\ge p\mid\max h(Y)<p]\ge 1-pN.
$$
\end{lemma}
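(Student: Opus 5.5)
The plan is to prove the bound key by key via a union bound, after conditioning on the table entries indexed by $Q$. First we reveal $T[\alpha]$ for every $\alpha\in Q$. The event $\{\max h(Y)<p\}$ is measurable with respect to these entries, because each $h(y)$ with $y\in Y$ is the xor of entries in $Q$. It therefore suffices to fix an arbitrary assignment of $\{T[\alpha]\}_{\alpha\in Q}$ for which $\max h(Y)<p$, and to show that under this conditioning
$$
\Pr[\exists x\in X:\ h(x)<p]\le pN.
$$
Averaging over the conditioning then gives the claim. The key point is that entries outside $Q$ remain mutually independent and uniform, and they are independent of the revealed ones.

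For each fixed $x\in X$, the hypothesis $P(x)\nsubseteq Q$ supplies a position-character $\alpha_x\in P(x)\setminus Q$. We write
$$
h(x)=T[\alpha_x]\oplus R_x,
$$
where $R_x$ is the xor of the other entries $T[\beta]$ with $\beta\in P(x)\setminus\{\alpha_x\}$. Some of these $\beta$ lie in $Q$ and are fixed; others lie outside $Q$ and are random, but all of them are independent of $T[\alpha_x]$. Since $T[\alpha_x]$ is uniform on $\{0,1\}^w$ and independent of $R_x$, the value $h(x)$ is uniform on $\{0,1\}^w$ conditioned on the revealed entries. Because $p=2^{-r}$ with $r\le w$, the event $h(x)<p$ says exactly that the first $r$ bits of $h(x)$ vanish, so it has conditional probability exactly $p$. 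A union bound over the $N$ keys of $X$ gives conditional failure probability at most $pN$, which is at most $1$ by the assumption $pN\le1$.

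The step needing the most care is the uniformity claim: the different $x$ may share position-characters outside $Q$, so the events $\{h(x)<p\}$ are correlated. The union bound sidesteps this, since it uses only the marginal of each individual $h(x)$. For that marginal, it suffices that one fresh uniform entry occurs in $h(x)$ exactly once. This holds because $P(x)$ contains one character per position, so $\alpha_x$ appears only once in the xor. No ordered exposure beyond this single conditioning step seems necessary.
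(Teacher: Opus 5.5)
Your proof is correct, and it takes a simpler route than the paper's. The paper fixes an order $\prec$ on position-characters with $Q$ first. It partitions $X$ into groups $G_\alpha$ according to the $\prec$-largest position-character of each key, and exposes the table entries in this order. It then has to check that all earlier groups depend only on entries exposed before $T[\alpha]$, so that $T[\alpha]$ is still fresh after conditioning on those groups avoiding $[0,p)$. This yields the bound $\prod_{\alpha\notin Q}(1-p|G_\alpha|)$, which the paper finally weakens to $1-pN$. You instead condition once on the entries indexed by $Q$ and take a single union bound over $X$. That needs only that each $h(x)$ is marginally uniform given those entries, which is exactly what your fresh character $\alpha_x$ provides. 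Since the paper discards the product structure in its last step anyway, nothing is lost for the lemma as stated or for its use in \Cref{thm:simple-tabulation}, where only $1-pN\ge 1/2$ is needed. Your version even makes the hypothesis $pN\le 1$ unnecessary; in the paper's argument it is what keeps each factor $1-p|G_\alpha|$ nonnegative. What the ordered exposure buys is the intermediate product bound, which can be far stronger than $1-pN$ when the groups are small (it equals $(1-p)^N$ when all groups are singletons). It is also the P\u{a}tra\c{s}cu--Thorup style machinery one would need for the positive results for $|Y|\le 3$ mentioned in the closing remark. A single union bound cannot give anything beyond this first-order estimate.
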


\begin{proof}
The table values $T[\alpha]$ associated with distinct position-characters $\alpha\in[c]\times\Sigma$ are mutually independent. We may therefore expose them one at a time in any fixed order without changing their joint distribution.

Choose a total order $\prec$ on $[c]\times\Sigma$ in which every position-character in $Q$ precedes every position-character outside $Q$. For each $\alpha\notin Q$, define
$$
G_\alpha:=\{x\in X:\alpha=\max\nolimits_\prec P(x)\}.
$$
Since $P(x)\nsubseteq Q$ for every $x\in X$, each key contains a position-character outside $Q$. Hence its $\prec$-maximum lies outside $Q$, and the nonempty groups $G_\alpha$ form a partition of $X$:
$$
X=\bigsqcup_{\alpha\notin Q}G_\alpha,
\qquad
\sum_{\alpha\notin Q}|G_\alpha|=N.
$$

The event $\{\max h(Y)<p\}$ depends only on the table entries indexed by $Q$. Thus, after conditioning on this event, the table entries indexed outside $Q$ remain mutually independent and uniform. We first expose the entries indexed by $Q$ and then expose the remaining entries according to $\prec$.

For a position-character $\alpha$, let $h(\prec\alpha)$ denote all table values exposed before $T[\alpha]$. Consider a nonempty group $G_\alpha$. For every $x\in G_\alpha$, all position-characters of $x$ other than $\alpha$ precede $\alpha$, and hence
$$
h(x)=T[\alpha]\oplus a_x:=T[\alpha] \oplus \bigoplus_{\beta \in P(x), \beta \prec \alpha} T[\beta],
$$
where $a_x$ is determined by $h(\prec\alpha)$.

Moreover, if $\beta\prec\alpha$ and $x\in G_\beta$, then every position-character of $x$ is at most $\beta$ under $\prec$. Therefore, every previously processed group depends only on table entries contained in $h(\prec\alpha)$. In particular, even after conditioning on all previous groups avoiding $[0,p)$, the value $T[\alpha]$ remains fresh and uniform.

Consequently, for every fixed $x\in G_\alpha$,
$$
\Pr[h(x)<p\mid h(\prec\alpha)]=p.
$$
The hash values within $G_\alpha$ need not be independent, but the union bound gives
$$
\Pr\left[h(G_\alpha)\cap[0,p)=\varnothing\mid h(\prec\alpha)\right]
\ge 1-p|G_\alpha|.
$$
Applying this bound successively to all nonempty groups yields
$$
\Pr[\min h(X)\ge p\mid\max h(Y)<p]
\ge\prod_{\alpha\notin Q}\bigl(1-p|G_\alpha|\bigr)
\ge 1-p\sum_{\alpha\notin Q}|G_\alpha|
=1-pN.
$$
\end{proof}

We are now ready to combine the two lemmas to prove the main theorem.

\begin{proof}[Proof of \Cref{thm:simple-tabulation}]
Fix two distinct characters $0,1\in\Sigma$ and let
$$
Y:=\{(a,b,0,\ldots,0):a,b\in\{0,1\}\}.
$$
Choose any set $X$ of $N$ distinct keys whose first character lies outside $\{0,1\}$. Such a set exists because $(|\Sigma|-2)|\Sigma|^{c-1}\ge N$. The sets $X$ and $Y$ are disjoint.

Let
$$
Q:=\bigcup_{y\in Y}P(y).
$$
For every $x\in X$, the first position-character $(1,x_1)$ does not belong to $Q$. Hence $P(x)\nsubseteq Q$, and \Cref{lem:tabulation-ordered-exposure} applies.

Set
$$
r:=\left\lceil\log_2(2N)\right\rceil
\qquad\text{and}\qquad
p:=2^{-r}.
$$
Since $M=2^w\ge 2N$, we have $r\le w$. Moreover,
$$
\frac{1}{4N}<p\le\frac{1}{2N}.
$$
By \Cref{lem:tabulation-rectangle},
$$
\Pr[\max h(Y)<p]=p^3.
$$
By \Cref{lem:tabulation-ordered-exposure},
$$
\Pr[\min h(X)\ge p\mid\max h(Y)<p]\ge 1-pN\ge\frac12.
$$
The event $\{\max h(Y)<p\}\cap\{\min h(X)\ge p\}$ implies $\{\max h(Y)<\min h(X)\}$. Therefore,
\begin{align*}
    \Pr[\max h(Y)<\min h(X)]
    &\ge
    \Pr[\max h(Y) < p,\,\min h(X) \ge p]\\
    &=
    \Pr[\max h(Y)<p] \cdot \Pr[\min h(X)\ge p\mid\max h(Y)<p]
    \\
    &\ge p^3\cdot\frac12 \ge\frac{1}{128N^3}.
\end{align*}

Under a uniformly random ordering of $X\cup Y$, the probability that the four prescribed elements of $Y$ precede all elements of $X$ is $\binom{N+4}{4}^{-1}$. Hence
$$
\frac{\Pr[\max h(Y)<\min h(X)]}{\binom{N+4}{4}^{-1}}\ge\frac{\binom{N+4}{4}}{128N^3}>\frac{N}{3072}.
$$
The ratio grows linearly with $N$, proving the claim.
\end{proof}

\begin{remark}
The obstruction above inherently uses four prescribed keys. For $|Y|\le3$, the argument of \cite{PT_tabulation} can be adapted to obtain analogous positive guarantees. We do not pursue this direction here.
\end{remark}

\bibliographystyle{alpha} 
\bibliography{rand}

\appendix
\section{Lower Bounds for Seed Length}
\label{sec:support-size-lower-bound}

\begin{proposition}\label{prop:support-size-lower-bound}
Let $1 \le k \le N / 2$ and $0<\delta<1$, and let $\mathcal H$ be a finite $k$-min-wise hash family with multiplicative error $\delta$. Then
$$
|\mathcal H|\ge\max\left\{\binom{N}{k},\min\left\{\frac{1}{\delta},\mathsf{lcm}(1,\ldots,N)\right\}\right\}.
$$
Consequently, the seed length of $\mathcal H$ is
$$
\Omega\left(k\log\frac{N}{k}+\min\left\{\log\frac{1}{\delta},N\right\}\right).
$$
In particular, when $k\le N^{1-c_1}$ and $\delta=N^{-c_2}$, the seed length is $\Omega(k\log N)$.
\end{proposition}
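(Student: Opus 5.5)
We prove the two lower bounds on $|\mathcal H|$ separately and then take logarithms. Throughout, $\mathcal H$ is viewed as a uniform distribution over its finite support, so every event has probability equal to an integer multiple of $1/|\mathcal H|$.

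\textbf{The bound $\binom{N}{k}$.} Take $Y$ ranging over all $k$-subsets of $[N]$, and set $X=[N]\setminus Y$, which has size $N-k\ge k$. \Cref{def:k_min_wise} with $\delta<1$ forces
$$
\Pr[\max h(Y)<\min h(X)]\ge (1-\delta)\binom{N}{k}^{-1}>0.
$$
So for each $k$-subset $Y$ there is some $h\in\mathcal H$ whose $k$ smallest values are exactly those of $Y$, strictly below all others. A single $h$ can witness this for at most one $Y$: if it witnessed two distinct sets $Y,Y'$, pick $y\in Y\setminus Y'$ and $y'\in Y'\setminus Y$; then $h(y)<h(y')$ and $h(y')<h(y)$, a contradiction. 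The map from $k$-subsets to witnessing functions is therefore injective, giving $|\mathcal H|\ge\binom Nk$.

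\textbf{The bound $\min\{1/\delta,\mathsf{lcm}(1,\dots,N)\}$.} Use the case $|Y|=1$. For each $n\le N$, take $Y=\{y\}$ and $|X|=n-1$. Then
$$
p_n:=\Pr[h(y)<\min h(X)]=\frac{a_n}{|\mathcal H|}
$$
for some integer $a_n$, while \Cref{def:k_min_wise} requires $|p_n-1/n|\le\delta/n$.

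Suppose $|\mathcal H|<\mathsf{lcm}(1,\dots,N)$. Then some $n\le N$ does not divide $|\mathcal H|$, so $1/n$ is not an integer multiple of $1/|\mathcal H|$. The distance from $1/n$ to the nearest multiple of $1/|\mathcal H|$ is
$$
\frac{\min_{a\in\mathbb{Z}}|\,|\mathcal H|-an\,|}{n|\mathcal H|}\ge\frac{1}{n|\mathcal H|},
$$
since the numerator is a nonzero integer. Combining with $|p_n-1/n|\le\delta/n$ gives $\delta/n\ge 1/(n|\mathcal H|)$, that is, $|\mathcal H|\ge1/\delta$. Hence in every case $|\mathcal H|\ge\min\{1/\delta,\mathsf{lcm}(1,\dots,N)\}$.

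\textbf{Seed length.} Taking logarithms, the first bound gives $\log\binom Nk\ge k\log(N/k)$. For the second, the prime number theorem gives $\log\mathsf{lcm}(1,\dots,N)=\Theta(N)$; an elementary bound such as $\mathsf{lcm}(1,\dots,N)\ge 2^{N-1}$ for $N\ge7$ also suffices. So the second bound gives $\Omega(\min\{\log 1/\delta,N\})$. The maximum of the two terms is within a factor $2$ of their sum, so the seed length is
$$
\Omega\left(k\log\frac Nk+\min\left\{\log\frac1\delta,N\right\}\right).
$$
In the regime $k\le N^{1-c_1}$ we have $\log(N/k)\ge c_1\log N$, so the first term alone is $\Omega(k\log N)$.

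The main obstacle is the divisibility argument for the second bound. It needs the observation that a non-divisor $n$ of $|\mathcal H|$ forces a gap of exactly $1/(n|\mathcal H|)$ between $1/n$ and every achievable probability; the remaining steps are routine.
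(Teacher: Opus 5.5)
Your proposal is correct and follows the paper's proof. The injectivity argument via strict bottom-$k$ sets gives $|\mathcal H|\ge\binom{N}{k}$ exactly as in the paper, and your non-divisor gap argument is the contrapositive of the paper's step: there, $|ma-D|\le\delta D<1$ forces $m\mid D$ for every $m\le N$ (you phrase the gap as "exactly" $1/(n|\mathcal H|)$ in your last paragraph, but "at least" is what you derive and all you need); the logarithmic consequences are then obtained the same way.
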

\begin{proof}
Let $D:=|\mathcal H|$. We first prove $D\ge\binom{N}{k}$. Fix any set $Y\subseteq[N]$ of size $k$, and apply the $k$-min-wise guarantee with $X=[N]\setminus Y$. Since $\delta<1$,
$$
\Pr_{h\sim\mathcal H}\left[\max h(Y)<\min h([N]\setminus Y)\right]\ge(1-\delta)\binom{N}{k}^{-1}>0.
$$
Hence some function in $\mathcal H$ realizes $Y$ as the strict bottom-$k$ set of $[N]$. A fixed hash function can realize at most one set as its strict bottom-$k$ set. Since every one of the $\binom{N}{k}$ possible choices of $Y$ must be realized, we obtain $D\ge\binom{N}{k}$.

We next derive a lower bound depending on $\delta$. Fix any integer $m\in[N]$, any set $S\subseteq[N]$ of size $m$, and any element $y\in S$. Since the $k$-min-wise property includes ordinary min-wise hashing and $\mathcal{H}$ is a finite family, there exists an integer $a$ such that
$$
\frac{a}{D}=\Pr_{h\sim\mathcal H}\left[h(y)<\min h(S\setminus\{y\})\right]=(1\pm\delta)\frac{1}{m}.
$$
Therefore,
$$
|ma-D|\le\delta D.
$$
Suppose that $D<1/\delta$. Then $\delta D<1$. Since $ma-D$ is an integer, the preceding inequality forces $ma-D=0$, and hence $m\mid D$. As this holds for every $m\in[N]$, we have $\mathsf{lcm}(1,\ldots,N)\mid D$. Thus either $D\ge1/\delta$, or $D\ge\mathsf{lcm}(1,\ldots,N)$, and consequently
$$
D\ge\min\left\{\frac{1}{\delta},\mathsf{lcm}(1,\ldots,N)\right\}.
$$
Combining the two lower bounds proves the claimed bound on $|\mathcal H|$. Taking logarithms and using $\log\binom{N}{k}=\Omega(k\log(N/k))$ for $k\le N/2$ and $\log\mathsf{lcm}(1,\ldots,N)=\Theta(N)$ (see, e.g., \cite{Sanna2021}) gives
$$
\log D=\Omega\left(k\log\frac{N}{k}+\min\left\{\log\frac{1}{\delta},N\right\}\right).
$$
\end{proof}

\end{document}